\title{Well-Separation and Hyperplane Transversals in High Dimensions
\footnote{H. Bergold is supported
	by the German Science Foundation within the research
	training group `Facets of Complexity' (GRK 2434).
	N. Grelier is supported by the Swiss National Science Foundation 
	within the collaborative DACH project 
	\emph{Arrangements and Drawings} as SNSF Project 200021E-171681.
	W. Mulzer is supported in part by ERC StG 757609
	and by the German Research Foundation within the collaborative 
	DACH project \emph{Arrangements
	and Drawings} as DFG Project MU 3501/3-1.
	P. Schnider has received funding from the European Research Council under the European Unions Seventh Framework Programme ERC Grant agreement ERC StG 716424 - CASe.
}
}
\date{ }
\theoremstyle{plain}
\newtheorem{theorem}{Theorem}[section]
\newtheorem{lemma}[theorem]{Lemma}
\newtheorem{claim}[theorem]{Claim}
\theoremstyle{definition}
\theoremstyle{remark}
\newcommand{\abbrev}[2]{\expandafter\newcommand\csname #1\endcsname{#2\xspace}}
\newcommand{\R}{\ensuremath{\mathbb{R}}}
\newcommand{\Z}{\ensuremath{\mathbb{Z}}}
\def\inst#1{$^{#1}$}
\begin{document}
	
\author{
	Helena Bergold\inst{1}
	\and
	Daniel Bertschinger\inst{2}
	\and
	Nicolas Grelier\inst{2}
	\and
	Wolfgang Mulzer\inst{1}
	\and
	Patrick Schnider\inst{3}
}	
	
\maketitle

\begin{center}
	{\footnotesize
		\inst{1} 
		Institut f\"ur Informatik, \\
		Freie Universit\"at Berlin, Germany,\\
		\texttt{helena.bergold@fu-berlin.de};\\
		\texttt{mulzer@inf.fu-berlin.de}
		\\\ \\
		\inst{2} 
		Department of Computer Science, \\
		ETH Z\"{u}rich\\
		\texttt{daniel.bertschinger@inf.ethz.ch};\\
		\texttt{nicolas.grelier@inf.ethz.ch}
		\\\ \\
		\inst{3} 
		Department of Mathematical Sciences, \\
		University of Copenhagen \\
		\texttt{ps@math.ku.dk}
		\\\ \\
	}
\end{center}

\begin{abstract}A family of $k$ point sets in $d$ dimensions is 
\emph{well-separated} if the convex hulls of any two disjoint
subfamilies can be separated by a hyperplane. Well-separation is
a strong assumption that allows us to conclude
that certain kinds of generalized
ham-sandwich cuts for the point sets exist. But how hard is it to
check if a given family of high-dimensional point sets
has this property?
Starting from this question, we study several algorithmic
aspects of the existence of transversals and separations in 
high-dimensions.

First, we give an explicit proof that $k$ point sets are
well-separated if and only if their convex hulls admit no 
\emph{$(k - 2)$-transversal},
i.e., if there exists no $(k - 2)$-dimensional flat that intersects
the convex hulls of all $k$ sets.
It follows that  the task of checking 
well-separation lies in the complexity class coNP. 
Next, we show that it is NP-hard to  decide whether there is a 
hyperplane-transversal (that is, a $(d - 1)$-transversal) of a family of 
$d + 1$ line segments in $\R^d$, where $d$ is part of the input. 
As a consequence, it follows that the general problem
of testing well-separation is coNP-complete. 
Furthermore, we show that finding a hyperplane that maximizes the 
number of intersected sets is NP-hard, but allows for 
an $\Omega\left(\frac{\log k}{k \log \log k}\right)$-approximation 
algorithm that is  
polynomial in $d$ and $k$, when each set consists of a single point. 
When all point sets are finite, we show that
checking whether there exists a $(k - 2)$-transversal is 
in fact strongly NP-complete. 

Finally, we take the viewpoint
of parametrized complexity, using the dimension $d$ as a parameter: 
given $k$ convex sets 
in $\R^d$, checking whether there is a $(k-2)$-transversal is FPT 
with respect to $d$. On the other hand, for $k\geq d+1$ 
finite point sets in $\R^d$, it turns out that checking whether there is a 
$(d-1)$-transversal is $W[1]$-hard with respect to $d$.
\end{abstract}

\section{Introduction}
In the study of high-dimensional ham-sandwich cuts,
the following notion has turned out to be fundamental:
we call $k$ sets $S_1, \dots, S_k$ in $\R^d$ are 
\emph{well-separated} if for any \emph{proper} index set $I \subset [k]$ 
(i.e., $I$ is neither empty nor all of $[k]$), the convex hulls of
$S_I = \cup_{i \in I} S_i$ and of $S_{[k] \setminus I} = 
\cup_{i \in [k] \setminus I} S_i$ can be separated by a hyperplane. 
Since any two  disjoint 
convex sets can be separated by a hyperplane~\cite{Matousek02}, 
well-separation is equivalent to the fact that for any proper index set $I$, 
the convex hulls of $S_I$ and $S_{[k]\setminus I}$ do not intersect.
A hyperplane $h$ is a \emph{transversal} of $S_1, \dots, S_k$ if
we have $ S_i \cap h \neq \emptyset$, for all $i \in [k]$. 
More generally, for $m \in \{0, \dots, d - 1\}$, an 
\emph{$m$-transversal} of $S_1, \dots, S_k$ is 
an $m$-flat (i.e., an $m$-dimensional affine subspace of $\R^d$) 
that intersects all the $S_i$.
As we shall see below, 
it turns out that well-separation is intimately related to transversals: 
the sets $S_1,\dots, S_k$ are well-separated if and only if there 
is no $(k - 2)$-transversal of the convex hulls of 
$S_1,\dots, S_k$.\footnote{Observe that for any $k \leq d$ sets in $\R^d$, 
there is always a 
$(k - 1)$-transversal: choose one point from each set, 
and consider a $(k - 1)$-flat that goes through these 
points. The $(k - 1)$-flat is unique if the 
chosen points are in general position.}

In the past, transversals have been studied extensively, 
mostly from a combinatorial, but also from a computational perspective.
Arguably the most well-known such theorem is 
\emph{Helly's theorem}~\cite{Helly}, which states that for any 
finite family of convex sets in $\R^d$, it holds that if every $d + 1$ 
of them have a point in common, then all of them do.
In other words, Helly's theorem gives a sufficient \emph{fingerprint} 
condition for a family of convex sets to have a $0$-transversal.
In 1935, Vincensini asked whether such a statement holds 
for general $k$-transversals, that is, whether there is some number 
$m(k, d)$ such that if any $m(k, d)$ sets of a family have a 
$k$-transversal, then all of them do.
This was disproved by Santal\'{o}, who showed that already the 
number $m(1, 2)$ does not exist (cf.~\cite{Handbook} for more details).

One reason why $0$-transversals differ significantly 
from $k$-transversals with $k > 0$ is that the space of $0$-transversals 
of a family of convex sets is itself a convex set. In constrast, 
for $k > 0$, 
the space of $k$-transversals can be very complicated, 
even for pairwise disjoint convex sets.
Thus, in order to generalize Helly's theorem to $k$-transversals with
$k > 0$, additional assumptions become necessary.
For example, Hadwiger's Transversal Theorem~\cite{HadwigerTransversal} 
states that for any family $\mathcal{S}$ 
of compact and convex sets in the plane, it holds that if 
there exists a linear ordering on $\mathcal{S}$ such that any 
three sets can be transversed by a 
directed line in accordance with this ordering, 
then there is a line transversal for $\mathcal{S}$.
This result has been extended to higher dimensions 
by Pollack and Wenger~\cite{PollackWenger}.
Note that to have a well-defined order in which 
a directed line intersects the sets, the
sets should be pairwise disjoint.
Now, \emph{well-separation} is a way to extend this idea 
to transversals of higher 
dimensions: if 
no $k + 1$ sets in a family $\mathcal{S}$ of convex sets 
have a $(k - 1)$-transversal, 
then every $k$-transversal $H$ intersects the set
$\mathcal{S}$ in a well-defined $k$-ordering, that is, for every
way of choosing a $k$-tuple of points from the intersections
of $H$ with $\mathcal{S}$, one point from each set,
the orientation of the resulting simplices is the
same
(that is, they all have the same \emph{order type})~\cite{PollackWenger}.
Under well-separation, the space of transversals becomes simpler, 
in particular for hyperplane transversals: it is now a union of 
contractible sets~\cite{Wenger}.
Note that in $d$ dimensions, there can be no $d + 2$ sets that
are well-separated, due to Radon's theorem which states that 
any set of $d + 2$ points in $d$ dimensions can be partitioned
into two sets whose convex hulls intersect.
For more background on transversals, we refer the interested readers 
to the relevant surveys, e.g.,~\cite{HellySurvey,NewTrends,Handbook}.

Thus, well-separation is a strong assumption on set-families, and 
it does not come as a surprise that for many problems it leads 
to stronger results and faster algorithms compared to the general case.
One such example is obtained for \emph{Ham-Sandwich cuts}, a well-studied
notion that occurs in many places in discrete geometry and topology~\cite{Matousek02}.
Given $d$ point sets $P_1,\dots,P_d$ in $\R^d$, a 
Ham-Sandwich cut is a hyperplane that simultaneously bisects all point sets.
While a Ham-Sandwich cut exists for any family of $d$ point sets~\cite{HS}, 
finding such  a cut is PPA-complete when the dimension is not 
fixed~\cite{GoldbergNecklace}, meaning that it is unlikely that there 
is an algorithm that runs in polynomial time in the dimension $d$.
On the other hand, if $P_1,\dots,P_d$ are well-separated, not only do 
there exist bisecting hyperplanes, but the Ham-Sandwich theorem 
can be generalized to hyperplanes cutting off arbitrary given 
fractions from each point set~\cite{AlphaHSMasses, AlphaHSPoints}.
Further, the problem of finding such a  hyperplane 
lies in the complexity class UEOPL~\cite{ChiuCM20}, a subclass 
of PPA that is believed to be much smaller than PPA.

From an algorithmic perspective, 
the main focus of the previous work have been an efficient
algorithms for finding line transversals 
in two and three dimensions, e.g., see~\cite{LTAgarwal,LTAvis,LTPS}.
To the authors' knowledge, in higher dimensions only algorithms for hyperplane 
transversals have been studied, where the best known algorithm 
for deciding whether a set of $n$ polyhedra with $m$ edges has a 
hyperplane transversal runs in time $O(nm^{d-1})$~\cite{AvisDoskas}.
In particular, there is an exponential dependence on the dimension $d$,
and there are no non-trivial algorithmic results for the
case that the dimension is part of the input.
This curse of dimensionality appears in many geometric problems.
For several problems, it has been shown that there is probably 
no hope to get rid of the exponential dependence in the dimension.
As a relevant example, 
Knauer, Tiwary, and Werner~\cite{knauer2011computational} showed
the following: given $d$ point sets $S_1,\dots,S_d$ in $\R^d$ 
and a point $p\in \R^d$, where $d$ is part of the input, 
it is $W[1]$-hard (and thus NP-hard) to decide whether 
there is there a Ham-sandwich cut for the sets that passes through~$p$.

\subparagraph*{Our Results.}
First, we prove that a family of $k$ sets in $\R^d$ is 
well-separated if and only if the convex hulls of the sets 
have no $(k - 2)$-transversal. This fact seems to be known, but we 
could only find some references without proofs, and some proofs 
of only one direction, for similar definitions of 
well-separation~\cite{castillo2021common,bisztriczky1990separated}. 
Therefore, for the sake of completeness, we present a short proof 
in Section~\ref{sec:wellsepandtrans}. 
This immediately implies that testing well-separation is in coNP.

In \cite{ChiuCM20}, the authors ask 
for the complexity of determining whether a family of point sets 
is well-separated when $d$ is not fixed. 
We present several hardness results for finding $(k-2)$-transversals 
in a family of $k$ sets in $\R^d$. We consider two cases: 
a) finite sets, and b) possibly infinite, but convex set.

\begin{restatable}{theorem}{NPhardTwoPoints}	
	\label{thm:NPhard2points}
	Given a set of $k > d$ point sets in $\R^d$, each with
	at most two points, it is strongly NP-hard to check 
	whether there is a $(d-1)$-transversal, even in the special case $k=d+1$.
\end{restatable}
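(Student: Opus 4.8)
The plan is to reduce the geometric question to a purely combinatorial selection problem, and then to encode a \emph{number-free} NP-hard problem so that all resulting coordinates are bounded integers; this automatically makes the hardness \emph{strong}.

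First I would show that a $(d-1)$-transversal of the $d+1$ sets exists if and only if one can pick a single point $p_i \in S_i$ from each set so that the chosen $p_1,\dots,p_{d+1}$ are affinely dependent. Lift each point $p$ to $\hat p = (p,-1)\in\R^{d+1}$ and write a candidate hyperplane as $\{x : \langle c,x\rangle = c_0\}$ with $\hat c=(c,c_0)$; then the hyperplane passes through $p$ iff $\langle \hat c,\hat p\rangle = 0$. A transversal is thus a nonzero $\hat c$ that is orthogonal to at least one lifted point of each set, and this exists iff some selection of lifted points is linearly dependent. Since every $\hat p_i$ has last coordinate $-1$, any nonzero $\hat c$ orthogonal to such a selection automatically has nonzero spatial part $c$, so we never accidentally produce the ``hyperplane at infinity'', and linear dependence of the lifts is exactly affine dependence of the $p_i$. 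This reduces the theorem to: \emph{given $d+1$ two-point sets, is there a choice of one point each that is affinely dependent?}

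Next I would reduce from \emph{positive 1-in-3-SAT}, which is NP-complete and carries no numerical parameters, so that every coordinate I introduce is a small integer. I would encode each variable $x_j$ by a two-point set whose two points represent its truth values (say at two heights on the $y_j$-axis), and dedicate one fresh coordinate and one set to each clause, plus a single anchor set, so that the number of sets is exactly $(n+m)+1 = d+1$ in dimension $d=n+m$. Each selection then corresponds to a truth assignment, and the goal is to place the points so that the affine-dependence determinant $D(t)$ of the selection, a multilinear function of the Boolean choices $t$, vanishes on \emph{precisely} the satisfying selections.

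The main obstacle is exactly this last point, and three failure modes must be ruled out simultaneously. If the variable and clause coordinates decouple, the selection matrix becomes block-triangular and $D(t)$ ignores the clauses; I therefore couple them by letting each variable point carry extra entries in the coordinates of the clauses containing it. More seriously, any coordinate in which \emph{all} points vanish yields a spurious ``contains-all-points'' transversal and in fact makes every selection affinely dependent, so the construction must keep the points in sufficiently general position --- in particular not all on a common hyperplane. Finally, satisfiability is a \emph{conjunction} over clauses, whereas $D(t)=0$ is a single equation, and the naive product encoding only captures ``some clause satisfied''; I would instead arrange the coordinates so that $D(t)$ equals a multilinear representative (on the cube of choices) of a sum $\sum_i v_i(t)$ of nonnegative per-clause penalties with $v_i(t)=0$ iff clause $i$ is satisfied, so that $D(t)=0$ iff all clauses hold. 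Verifying that such a determinant is realizable by the structured point matrix while respecting the exact count $k=d+1$ and keeping all coordinates polynomially bounded is the crux. Membership in NP is easy (guess the selection and test affine dependence by a rank computation), so together with the reduction this yields strong NP-completeness.
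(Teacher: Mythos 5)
Your opening reduction is correct and matches what the paper uses implicitly: with $k=d+1$ sets, a hyperplane transversal exists iff one can select one point per set so that the selected points are affinely dependent, and the lifting argument (last coordinate $-1$, so the spatial part of any orthogonal vector is automatically nonzero) is sound. The choice of a number-free source problem to get strong hardness is also a legitimate strategy, analogous in spirit to the paper's use of \textsc{BinPacking}. However, the proposal has a genuine gap exactly where you say the ``crux'' is: you never construct the point sets. The requirement you impose --- that the affine-dependence determinant $D(t)$, a multilinear function of the Boolean selection $t$, equal a sum of nonnegative per-clause penalties vanishing precisely on satisfying assignments --- is a strong algebraic realizability condition, and nothing in the proposal shows it can be met. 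Indeed, the space of multilinear polynomials in $d+1$ selection variables has dimension $2^{d+1}$, while your construction has only $O(d^2)$ free coordinates, so most target polynomials are simply not realizable as such a determinant; you would need to exhibit a concrete realizable one, and that is the entire content of the theorem. As written, the proof establishes the easy equivalence and then states a plan whose feasibility is open.

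It is worth contrasting this with how the paper sidesteps the ``conjunction versus single equation'' obstacle you correctly identify. The paper does not try to shape the determinant as a polynomial at all. Instead (Theorem~\ref{thm:StrongNPhard2points}, reducing from \textsc{EqualBinPacking}), it includes the origin as one of the sets and adds one dedicated coordinate per selection set, so that \emph{any} nontrivial affine dependence is forced to have all coefficients $\lambda_l$ equal; once the coefficients are pinned down, the vector equation $\sum_l \lambda_l p_l = \mathbf{0}$ decomposes coordinate by coordinate into the desired system of constraints (each bin filled exactly to capacity, each item placed exactly once), giving the conjunction for free. A final padding step with singleton sets in two extra dimensions (the proof of Theorem~\ref{thm:NPhard2points}) then enforces $k=d+1$ exactly. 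If you want to rescue your 1-in-3-SAT route, the natural fix is to adopt this mechanism: add per-set coordinates that force equal coefficients, and encode ``exactly one true literal per clause'' as one linear constraint per clause coordinate, rather than trying to realize a penalty-sum determinant.
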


Note that this decision problem is trivial for $k\leq d$, as the answer 
is always yes. The assumption $k=d+1$ is of special interest 
to us since the transversals we are considering are hyperplanes in $\R^d$, 
as in the Ham-sandwich cuts problem. Moreover, it shows that the problem 
becomes NP-hard for the first non-trivial value of $k$. 
We extend Theorem~\ref{thm:NPhard2points} to show the following:

\begin{restatable}{theorem}{NPhardsegments}
	\label{thm:NPhardsegments}
		Given a set of $k>d$ line segments in $\R^d$, it is NP-hard 
		to check whether there is a $(d-1)$-transversal, 
		even in the special case $k=d+1$.
\end{restatable}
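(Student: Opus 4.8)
The plan is to reduce from the two-point version, Theorem~\ref{thm:NPhard2points}. So I start from $d+1$ sets $\{a_i,b_i\}_{i=1}^{d+1}$ in $\R^d$ for which it is NP-hard to decide whether some affine hyperplane $\{x:\langle n,x\rangle=\gamma\}$ passes through at least one point of every pair. The tempting move is to replace each pair by the segment $[a_i,b_i]$; writing $f_i=\langle n,a_i\rangle-\gamma$ and $g_i=\langle n,b_i\rangle-\gamma$, a hyperplane crosses this segment iff $f_ig_i\le 0$, whereas passing through an endpoint is the stronger condition $f_ig_i=0$. Thus the segment version is a genuine relaxation (a hyperplane may cut a segment through its interior), and this naive reduction is unsound: already in $\R^1$ two intervals may overlap without sharing an endpoint value. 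The whole difficulty is to turn the inequality $f_ig_i\le 0$ back into the equality $f_ig_i=0$ using only segment-crossing constraints.

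To do this I would lift by one extra coordinate. Map each point $p\in\R^d$ to $\hat p=(p,-1)\in\R^{d+1}$, and force the transversal to pass through the origin by adding the degenerate segment $\{\mathbf 0\}$: an affine hyperplane of $\R^{d+1}$ contains $\mathbf 0$ iff it is \emph{linear}, i.e.\ of the form $W^\perp$ with $W=(n,\gamma)$. For such a linear hyperplane one checks $\langle W,\hat a_i\rangle=f_i$ and $\langle W,\hat b_i\rangle=g_i$, while $\langle W,-\hat b_i\rangle=-g_i$. Hence crossing $[\hat a_i,\hat b_i]$ gives $f_ig_i\le 0$, and crossing the reflected segment $[\hat a_i,-\hat b_i]$ gives $f_i(-g_i)\le 0$, i.e.\ $f_ig_i\ge 0$; together they say exactly $f_ig_i=0$, which means the original affine hyperplane $\{\langle n,x\rangle=\gamma\}$ runs through $a_i$ or $b_i$. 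This produces $2(d+1)+1=2d+3$ segments in $\R^{d+1}$ whose common transversals correspond precisely to the yes-instances of the point problem. Since the endpoints $\{\hat a_i,\hat b_i,-\hat b_i,\mathbf 0\}$ affinely span $\R^{d+1}$, no hyperplane contains all of them, so there is no trivial transversal; and the degenerate case $W=(\mathbf 0,\gamma)$ is ruled out automatically, as it fails to cross the pair-segments (both their endpoints lie in $\{Y_{d+1}=-1\}$). This already settles the general statement for $k=2d+3>d+1=\dim$.

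The main obstacle is the \emph{special case} $k=\dim+1$: the construction above has $2d+3$ segments in dimension $d+1$, a gap of $d+2$ rather than $1$. I would close this by realizing the same logical constraints in dimension $\delta=2d+2$, so that $k=2d+3=\delta+1$, spreading the segments across the $d+1$ new coordinates so that they affinely span $\R^{2d+2}$. The delicate part is that naively spreading the endpoints introduces $d+1$ free normal components $\nu$; each such free direction shifts both endpoints of a segment by the same amount and therefore reopens the forbidden interior crossings, destroying the equivalence. The crux is thus to place the segments so that they simultaneously (i) span the ambient space, preventing an all-containing hyperplane, and (ii) keep each segment inside a slice on which the extra normal directions act trivially, without spending additional segments and thereby re-opening the gap. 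It is precisely this tension --- forcing equalities and spanning within an \emph{exact} budget of $\delta+1$ segments --- that requires endpoints of widely separated magnitudes; clearing denominators then yields integer coordinates of exponential size, which is why this reduction gives NP-hardness but, unlike Theorem~\ref{thm:NPhard2points}, not \emph{strong} NP-hardness.

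Finally I would tidy up two points. If line segments must be nondegenerate, the origin constraint $\{\mathbf 0\}$ is replaced by a very short segment through $\mathbf 0$, and a compactness/perturbation argument recovers the through-origin condition in the limit. Both directions of the reduction are then immediate from the equivalence $f_ig_i\le 0$ and $f_ig_i\ge 0$ iff $f_ig_i=0$: a common segment transversal exists iff the original family of pairs admits a hyperplane through one point of each pair. Combined with Theorem~\ref{thm:NPhard2points}, this yields the claimed NP-hardness, including the special case $k=\dim+1$.
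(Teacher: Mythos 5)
Your gadget is the right one, and it is essentially the same as the paper's: pair each segment $[a_i,b_i]$ with a sign-flipped copy so that any hyperplane through the origin crossing both must satisfy $f_ig_i\le 0$ and $f_ig_i\ge 0$, hence pass through an endpoint (the gadget of Figure~\ref{fig:gadget} takes endpoints $\alpha_i A_i$ and $\beta_i B_i$ with the signs arranged so that crossing both segments forces an endpoint, which is your reflected segment up to scaling). This correctly reduces the two-point problem of Theorem~\ref{thm:NPhard2points} to the segment problem, but only with roughly $2d$ segments in dimension roughly $d$. The theorem, however, explicitly claims hardness \emph{even in the special case} $k=d+1$, and that is exactly where your write-up stops being a proof: you acknowledge the count is $2d+3$ sets in dimension $d+1$, you correctly observe that naively spreading the segments into new coordinates creates free normal components that reopen interior crossings, and you then leave the resolution as an unresolved ``tension,'' together with an unsupported guess that closing it requires coordinates of exponentially separated magnitudes. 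No construction is given, so the substantive part of the statement remains unproved.

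The paper closes precisely this gap, with a construction of polynomial size. Keep the original segments in the base coordinates, with endpoints $(A_i,\mathbf{0}^k)$ and $(B_i,\mathbf{0}^k)$, and place the $i$-th gadget segment at height $\varepsilon$ in its own private coordinate, with endpoints $(G_i,\varepsilon_i)$ and $(H_i,\varepsilon_i)$; together with $\{\mathbf{0}\}$ this gives $2k+1$ sets in $\R^{2k}$, so the number of sets equals the dimension plus one. Crucially, the equivalence proof does \emph{not} try to prevent the lifted hyperplane from crossing gadget segments in their interiors --- your worry, which is indeed unavoidable. Instead it works with the affine dependence expressing the origin: writing $\mathbf{0}=\sum_i m^{(i)}x'_i+\sum_i n^{(i)}y'_i$ with $\sum_i\bigl(m^{(i)}+n^{(i)}\bigr)=1$, the private coordinate $k+i$ forces $n^{(i)}\varepsilon=0$, hence $n^{(i)}=0$ for every $i$. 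Thus the origin lies in the affine hull of the chosen points on the \emph{original} segments alone; back in $\R^k$ this hull is a linear subspace of dimension at most $k-1$, and because the gadget endpoints are scalar multiples of the original endpoints ($G_j=\alpha_jA_j$, $H_j=\beta_jB_j$), that subspace automatically meets every gadget segment. One therefore recovers a hyperplane transversal of the unlifted instance, to which the endpoint-forcing property applies. This is the idea missing from your proposal: the extra coordinates are used not to geometrically block interior crossings but to annihilate the gadget points' coefficients in the affine dependence, after which the low-dimensional gadget argument takes over --- and no exponentially large coordinates are needed.
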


Theorem~\ref{thm:NPhardsegments} implies that testing well-separation 
is coNP-complete even for $d+1$ segments in $\R^d$, answering 
the question from~\cite{ChiuCM20}.
Further, we show the following result, with a stronger hardness 
than Theorem~\ref{thm:NPhard2points}; however, we remove 
the additional constraint that $k=d+1$.

\begin{restatable}{theorem}{StrongNPhardTwoPoints}
	\label{thm:StrongNPhard2points}
	Given a set of $k\leq d+1$ point sets in $\R^d$, each with 
	most two points, it is strongly NP-hard to check whether 
	there is a $(k-2)$-transversal.
\end{restatable}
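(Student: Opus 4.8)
The plan is to derive Theorem~\ref{thm:StrongNPhard2points} from Theorem~\ref{thm:NPhard2points} by a dimension-lifting reduction, which shows that the hardness of the hyperplane case persists for genuinely lower-dimensional transversals, i.e., in the regime $k<d+1$. Take an instance of the problem in Theorem~\ref{thm:NPhard2points}: point sets $S_1,\dots,S_{n+1}$ in $\R^n$, each with at most two points, where we ask for an $(n-1)$-transversal. I would embed $\R^n$ into $\R^d$ for an arbitrary $d\ge n$ as the coordinate subspace $\{x_{n+1}=\dots=x_d=0\}$ and keep the same $n+1$ point sets, now regarded as sitting inside $\R^d$. This produces an instance with $k=n+1$ sets in $\R^d$, so $k\le d+1$, in which we ask for a $(k-2)=(n-1)$-transversal; choosing $d>n$ places us squarely in the new regime $k<d+1$.

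The crux is to show that this embedding preserves the answer, i.e., that enlarging the ambient dimension creates no spurious transversals. Observe that an $(n-1)$-flat $F\subseteq\R^d$ intersects a two-point set $S_i=\{p_i,q_i\}$ precisely when it contains $p_i$ or $q_i$; hence $F$ is an $(n-1)$-transversal if and only if there is a choice of representatives $r_i\in S_i$ with $\{r_1,\dots,r_{n+1}\}\subseteq F$, which forces $\dim\afff{r_1,\dots,r_{n+1}}\le n-1$. Since every $r_i$ lies in the subspace $\R^n$, the affine hull $\afff{r_1,\dots,r_{n+1}}$ is intrinsic to $\R^n$, so the condition $\dim\afff{r_1,\dots,r_{n+1}}\le n-1$ holds if and only if the $r_i$ lie on a common hyperplane of $\R^n$. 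Conversely, any such hyperplane of $\R^n$, viewed inside $\R^d$, is an $(n-1)$-flat meeting every set. Thus the lifted instance has a $(k-2)$-transversal if and only if the original instance has a hyperplane transversal.

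The reduction runs in polynomial time and leaves all point coordinates unchanged, so it preserves the magnitude of the numbers involved; therefore the strong NP-hardness supplied by Theorem~\ref{thm:NPhard2points} transfers, yielding strong NP-hardness of the $(k-2)$-transversal problem over the full range $k\le d+1$. I expect the main obstacle to be exactly the ``no spurious transversals'' step: one must argue carefully that the extra $d-n$ coordinate directions cannot be exploited by a tilted $(n-1)$-flat. The clean way to dispatch this is the observation that the representatives $r_i$ are confined to the copy of $\R^n$, so the relevant affine-dependence condition is the intrinsic one and is unaffected by the larger ambient dimension.
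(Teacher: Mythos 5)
Your lifting argument is sound in isolation (finite sets, intrinsic affine hulls, unchanged coordinates), and for the theorem as stated you would not even need it: the case $k=d+1$ of Theorem~\ref{thm:StrongNPhard2points} is \emph{literally} the same problem as the special case $k=d+1$ of Theorem~\ref{thm:NPhard2points}, since then $(k-2)=(d-1)$, so hardness of that special case already implies hardness of the more general problem. The genuine gap is that your proof is circular within this paper. The only proof of \emph{strong} NP-hardness of Theorem~\ref{thm:NPhard2points} that the paper has goes through Theorem~\ref{thm:StrongNPhard2points}: in Section~\ref{sec:secondreduction} the paper first proves Theorem~\ref{thm:StrongNPhard2points} directly, by a reduction from \textsc{EqualBinPacking} (a strongly NP-hard variant of \textsc{BinPacking}), and only afterwards derives Theorem~\ref{thm:NPhard2points} from it, by padding a \textsc{FlatTrans} instance with two extra coordinates and singleton sets. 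The earlier reduction in Section~\ref{subsubsec:twopoints} cannot rescue you either: it reduces from \textsc{SubsetSum}, which is only weakly NP-complete, and the paper states explicitly that it proves only a weaker version of Theorem~\ref{thm:NPhard2points}. So by invoking Theorem~\ref{thm:NPhard2points} you reverse the paper's actual dependency chain, and strong NP-hardness is never established by anything.

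What is missing is precisely the content of the paper's proof: a direct reduction from a strongly NP-hard problem to the $(k-2)$-transversal problem. Concretely, the paper takes an \textsc{EqualBinPacking} instance with weights $w_1,\dots,w_n$, $k$ bins and capacity $b$, and builds $kn+2$ sets in $\R^{k+n+kn}$: for each item--bin pair $(i,j)$ a two-point set $\{v_{i,j},u_{i,j}\}$, plus the singletons $\{c\}$ and $\{\mathbf{0}\}$. The first $k$ coordinates encode bin loads against the capacity vector $c$, the next $n$ coordinates enforce that each item is placed in exactly one bin, and the last $kn$ ``selector'' coordinates force all affine coefficients in any representation of $\mathbf{0}$ to be equal, so that a $(kn)$-transversal exists if and only if the packing instance is solvable exactly. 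If you want a correct write-up, you must supply a construction of this kind (or another reduction from a strongly NP-hard problem); your embedding step could then be retained only as a remark that hardness also holds for instances with $k$ strictly smaller than $d+1$, a mild strengthening the paper does not state.
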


Observe that for the problem of Theorem~\ref{thm:StrongNPhard2points}, 
we consider $(k-2)$-transversals instead of $(d-1)$-transversals. 
In this context, the problem becomes trivial for $k\geq d+2$, 
because all sets lie in $\R^d$. On the positive side, 
we can show the existence of the following approximation algorithm. 
This can be seen as the special case 
where each point set consists of a single point.

\begin{restatable}{theorem}{approxAlgo}
	\label{thm:approxAlgo}
	Given a set $P$ of $k$ points in $\R^d$, it is 
	possible to compute in polynomial time in $d$ and~$k$ a hyperplane 
	that contains $\Omega(\frac{\textup{OPT} \log k}{k \log \log k}$) points
	of $P$, where $\text{OPT}$ 
	denotes the maximum number of points in $P$ that a hyperplane can contain.
\end{restatable}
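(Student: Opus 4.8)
The plan is to exploit that, since each ``point set'' is a single point, the objective is exactly to find a hyperplane through as many points of $P$ as possible, and to combine one trivial structural bound with a single randomized sampling routine via a two-regime analysis. Throughout write $t:=\frac{\log k}{\log\log k}$ and record the elementary identities $t\log t=\Theta(\log k)$ and $\log t=\Theta(\log\log k)$; this is what forces the two regimes below to meet. The first, trivial, ingredient is a structural lower bound: any $d$ points of $P$ affinely span a flat of dimension at most $d-1$ and hence lie on a common hyperplane, and if $k\le d$ then all of $P$ lies on a common hyperplane. Thus, in time polynomial in $d$ and $k$, we can always output a hyperplane through at least $\min(k,d)$ points, and in particular through at least $d$ points when $k>d$.

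The second ingredient is a sampling routine. For each subset size $s\in\{1,\dots,d\}$ I would draw $k^{O(1)}$ independent uniform $s$-subsets of $P$; for each sample I take the affine hull of the chosen points, extend it arbitrarily to a hyperplane if its dimension is below $d-1$, count how many points of $P$ it contains, and keep the best hyperplane found. The key quantity is that a uniform $d$-subset lands inside a fixed optimal hyperplane $h^{*}$ (which carries $\mathrm{OPT}$ points) with probability $\binom{\mathrm{OPT}}{d}/\binom{k}{d}\ge(\mathrm{OPT}/k)^{d}(1-o(1))$; conditioned on this event the sampled points can be taken to affinely span $h^{*}$ after passing to the correct size $s$, so the routine reconstructs $h^{*}$ and recovers all $\mathrm{OPT}$ points. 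With $N=k^{c}$ samples the failure probability is at most $\exp(-N(\mathrm{OPT}/k)^{d})$, which is $o(1)$ as soon as $d\le \frac{c\log k}{\log(k/\mathrm{OPT})}$.

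I would then close the bound by a case analysis against the target $T:=\frac{\mathrm{OPT}}{k}\,t=\Theta\!\left(\frac{\mathrm{OPT}\log k}{k\log\log k}\right)$. If $d\ge T$, the trivial bound already returns a hyperplane through $\ge d\ge T$ points. If $d<T$, then $\mathrm{OPT}>dk/t$, whence $\log(k/\mathrm{OPT})<\log t=\Theta(\log\log k)$; taking $c$ a large enough constant makes the sampling threshold $\frac{c\log k}{\log(k/\mathrm{OPT})}$ exceed $t$, and since $d<T\le t$ this guarantees $d\le\frac{c\log k}{\log(k/\mathrm{OPT})}$, so the routine recovers all $\mathrm{OPT}\ge T$ points with high probability. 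Either way the algorithm returns $\Omega(T)$ points in time polynomial in $d$ and $k$.

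I expect the main obstacle to be turning this into a fully rigorous statement. First, the probabilistic analysis must be uniform over the entire range of $\mathrm{OPT}$ and $d$ so that the two regime thresholds provably meet; this is exactly where the $\log\log k$ factor enters, through the bound $\log(k/\mathrm{OPT})\le\log t=\Theta(\log\log k)$. Second, the conditional ``affine spanning'' step needs care when the $\mathrm{OPT}$ points of $h^{*}$ do not affinely span $h^{*}$, which is the reason for sampling all sizes $s\le d$ and extending arbitrarily. Third, since the statement asserts a deterministic computation, one must either present the routine as a randomized (one-sided-error) algorithm or derandomize the subset search via the method of conditional expectations, which I expect to be the most delicate part.
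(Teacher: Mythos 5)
Your overall skeleton (a trivial hyperplane through $\min(k,d)$ points, plus a second routine that recovers the optimum when $\mathrm{OPT}$ is large relative to $d$, glued together by a two-regime analysis at the threshold $t=\log k/\log\log k$) mirrors the paper's, but your second routine is randomized sampling rather than the paper's deterministic search, and as written it has two genuine gaps. The first is the spanning step. Conditioned on a sampled subset lying on the optimal hyperplane $h^*$, you claim the sample ``can be taken to affinely span $h^*$ after passing to the correct size $s$.'' This is false in general: take the $\mathrm{OPT}$ points to be a nested configuration of ``apex'' points, each lying outside the affine hull of the points below it, ending in a cloud of points in general position on a low-dimensional flat. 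Then for \emph{every} size $s$, a uniform $s$-subset spans $\aff$ of the $\mathrm{OPT}$ points only if it contains all the apexes, an event of probability roughly $(s/\mathrm{OPT})^{j}$ for $j$ apexes, which your $k^{c}$ samples cannot overcome when $j$ is close to $d$. What actually saves the algorithm is a weaker statement: with probability at least about $2^{-d}(\mathrm{OPT}/k)^{d}$ per batch, the affine hull of the sample --- whatever its dimension --- already \emph{contains} $\Omega(\mathrm{OPT})$ points of $P$. One can prove this by a greedy/prefix argument (sample points one at a time; at each step either the current hull contains half the optimal points, or the next conditioned sample escapes it with probability at least $1/2$, and the dimension can only increase $d-1$ times), and the extra $2^{-d}$ factor is absorbed by your regime analysis. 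You flag this issue but supply no such argument, and ``extend arbitrarily to a hyperplane'' does not close it. The second gap is determinism: the theorem asserts a polynomial-time computation and the paper's algorithm is deterministic, while yours is Monte Carlo; your suggested derandomization by conditional expectations is unjustified, since the success event is the existence of one good subset among exponentially many, not an expectation that decomposes over independent coordinate choices.

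For comparison, the paper avoids randomness with a simpler device: partition $P$ into groups of size $f(k)=\log k/\log\log k$ and enumerate, within each group, every hyperplane through $d$ of its points; since $f(k)^{f(k)}=O(k)$, this is only $O(k)$ candidates per group, hence polynomial overall. The analysis is then exactly your dichotomy: either the optimal hyperplane meets some group in at least $d$ points (and the enumeration recovers it), or $\mathrm{OPT}\leq d\,k/f(k)$, in which case any candidate hyperplane, having $d$ points, already achieves the ratio. Your sampling idea is repairable along the lines sketched above and is a genuinely different mechanism, but as submitted the proof is incomplete, and the deterministic grouping argument is both simpler and strictly stronger.
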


In Section~\ref{sec:paramCompl}, we study the problem through 
the lens of parametrized complexity. We show a significant 
difference between finite sets and convex sets.

\begin{restatable}{theorem}{FptAlgo}
	\label{thm:FptAlgo}
	Checking whether a family of $k$ convex sets in $\R^d$ has 
	a $(k-2)$-transversal (or equivalently, whether it is well-separated) 
	is FPT with respect to $d$.
\end{restatable}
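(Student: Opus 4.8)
The plan is to exploit the equivalence established in Section~\ref{sec:wellsepandtrans}, which reduces the existence of a $(k-2)$-transversal to a purely combinatorial condition on bipartitions of $[k]$, and then to observe that the number of relevant bipartitions is bounded by a function of $d$ alone. Concretely, by that equivalence a $(k-2)$-transversal exists if and only if the family is \emph{not} well-separated, which by definition happens if and only if there is a proper index set $I\subset[k]$ with $\convv{S_I}\cap\convv{S_{[k]\setminus I}}\neq\emptyset$ (using that disjoint convex sets are separable). So the whole problem becomes: decide whether some proper bipartition has intersecting convex hulls.

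First I would dispose of the case $k\ge d+2$. By Radon's theorem any $d+2$ points in $\R^d$ admit a Radon partition, so picking one point from each of any $d+2$ of the sets already yields two subfamilies whose convex hulls meet; assigning the remaining sets arbitrarily to the two sides produces a proper bipartition with intersecting hulls. Hence for $k\ge d+2$ the family is never well-separated, so a $(k-2)$-transversal always exists, and the algorithm may answer ``yes'' in polynomial time. For the remaining range $k\le d+1$, the algorithm enumerates all bipartitions of $[k]$ into two nonempty parts; there are exactly $2^{k-1}-1$ of them, and since $k\le d+1$ this is at most $2^{d}$, i.e.\ bounded by a function of $d$ alone. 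This is precisely the point that turns an a priori exponential-in-$k$ search into an FPT procedure.

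It then remains to check that a single intersection test runs in time polynomial in the input size. Here I would use that $\conv\left(\bigcup_{i\in I}S_i\right)$ is the set of all convex combinations taking one point from each $S_i$, so its support function is $h_{\convv{S_I}}(u)=\max_{i\in I}h_{S_i}(u)$, where $h_{S_i}(u)=\sup_{x\in S_i}\langle u,x\rangle$. Thus a linear-optimization oracle for each $S_i$ yields one for $\convv{S_I}$, and hence, by the polynomial equivalence of optimization and separation, a separation oracle. Two convex bodies given by separation oracles can be tested for intersection in polynomial time via the ellipsoid method, and in the concrete settings of finite point sets or explicitly given polytopes the test is simply a single linear program. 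Combining the three steps yields total running time $2^{d}\cdot\poly(n)$, which establishes membership in FPT with respect to $d$.

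The main obstacle I anticipate is not the combinatorial counting, which is immediate once the equivalence of Section~\ref{sec:wellsepandtrans} is in hand, but rather the bookkeeping around the representation of the convex sets: one has to fix an oracle/representation model (separation or linear-optimization access, together with suitable boundedness guarantees) so that each per-partition feasibility test is genuinely polynomial, and one has to verify that the convex hull of a union of such sets still admits an oracle of the same kind. Once this is set up, the $2^{d}$ bound on the number of bipartitions does the rest.
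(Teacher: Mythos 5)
Your proposal is correct and takes essentially the same approach as the paper: enumerate the at most $2^{d}$ proper bipartitions $I\subset[k]$ (valid since $k\le d+1$, a point you justify via Radon's theorem and the paper leaves implicit) and decide each one with a single polynomial-time LP or convex-feasibility test. The only difference is cosmetic: you test intersection of the convex hulls of $S_I$ and $S_{[k]\setminus I}$ directly (adding oracle machinery for general convex sets), whereas the paper, assuming each set is given by its extreme points, solves the equivalent LP of finding a separating hyperplane.
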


\begin{restatable}{theorem}{WHardness}
	\label{thm:W1Hardness}
	Checking whether a family of $k \geq d+1$ finite point sets in $\R^d$ 
	has a $(d-1)$-transversal is W[1]-hard with respect to $d$.
\end{restatable}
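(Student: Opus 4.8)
The plan is to prove $W[1]$-hardness of deciding whether $k \geq d+1$ finite point sets in $\R^d$ admit a $(d-1)$-transversal (a hyperplane transversal) by a parametrized reduction from a known $W[1]$-hard problem, using the dimension $d$ as the parameter. The natural source problem is $k$-\textsc{Clique} (or, given the paper's prior citations, the $W[1]$-hard Ham-sandwich-through-a-point problem of Knauer, Tiwary, and Werner~\cite{knauer2011computational}), whose parameter controls the target dimension. The key structural fact to exploit is that a hyperplane in $\R^d$ is determined by $d$ parameters, so encoding a solution as a transversal lets the combinatorial parameter of the source instance map to the geometric dimension $d$ of the target instance — this is exactly what a parametrized reduction needs: the new parameter must be bounded by a function of the old one.

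\textbf{Overall strategy.} First I would fix the source problem and recall that it is $W[1]$-hard with parameter $\ell$, say. Then I would describe a polynomial-time computable reduction that, from an instance of size $n$ with parameter $\ell$, produces a family of finite point sets in $\R^d$ with $d = f(\ell)$ for some computable $f$, together with $k \geq d+1$ sets, such that the family has a hyperplane transversal if and only if the source instance is a YES-instance. The construction would place, for each element or ``choice'' in the source instance, a small point set whose intersection with a candidate hyperplane forces a consistent encoding of one combinatorial selection (e.g., one vertex per part of the clique, or the selection realizing the Ham-sandwich condition). Disjointness gadgets — additional point sets that no hyperplane can meet unless the coordinates it is forced to take encode a genuine solution — would enforce the verification constraints (adjacency of the chosen vertices, or the bisection/incidence condition).

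\textbf{Ordered steps.} (1) State the source $W[1]$-hard problem and its parameter. (2) Define the target dimension $d$ as a function of the source parameter and verify $d$ depends only on the parameter, not on $n$. (3) Build the point sets: selection gadgets encoding the combinatorial choices as coordinates of the transversal hyperplane, and constraint gadgets forcing those coordinates to satisfy the verification predicate. (4) Prove the forward direction: a solution to the source instance yields an explicit hyperplane meeting every point set. (5) Prove the backward direction: any hyperplane transversal, by the way the gadgets constrain its normal vector and offset, must decode to a valid solution of the source instance. (6) Check that $k \geq d+1$ can be arranged and that the whole construction is polynomial in $n$, with parameter $d = f(\ell)$, completing the parametrized reduction.

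\textbf{Main obstacle.} The hardest part will be designing the constraint gadgets so that the backward direction is tight: I must guarantee that every hyperplane meeting all point sets — not merely the ``intended'' ones — corresponds to a valid combinatorial solution, ruling out spurious transversals that satisfy the incidence conditions by accident. Controlling this typically requires placing each selection gadget's points on a carefully chosen curve (a moment-curve-type configuration, for which incidences with hyperplanes are well understood) so that a hyperplane's intersection pattern rigidly encodes a single discrete choice, and then arguing that the constraint gadgets leave no room for cheating. Keeping the dimension $d$ bounded purely by the source parameter while still having enough geometric room for all gadgets is the delicate balance, since adding gadgets must not inflate $d$ beyond $f(\ell)$.
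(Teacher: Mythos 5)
Your proposal has the right skeleton---an FPT reduction from $k$-\textsc{Clique} with the clique parameter controlling the dimension, selection gadgets, consistency gadgets, and the two directions of correctness---and this is indeed the architecture of the paper's proof, which follows Marx's framework with $k^2$ encoding gadgets (one per ordered pair of clique positions) plus $k$ row and $k$ column consistency gadgets. But the proposal stops exactly where the actual proof begins: everything hinges on \emph{how} a transversal can be forced to encode a consistent combinatorial selection, and you leave that mechanism unspecified. The paper's mechanism is the real content of the theorem. It includes the special sets $P_0=\{\mathbf{0}\}$ and $P_1=\{p_1\}$, so that a transversal amounts to an affine dependence $\sum_\ell \lambda_\ell p_\ell=\mathbf{0}$, $\sum_\ell \lambda_\ell=1$, among one chosen point per set. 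Each gadget point carries a $1$ in its own ``indicator'' coordinate while $p_1$ carries $-1$ in all of them; these coordinates force every coefficient $\lambda_\ell$ to equal $1/(k'+1)$ (where $k'=k^2+2k$ is the number of gadgets), which kills the main source of spurious transversals. Vertex choices are then encoded as powers $k^i$ in dedicated row and column coordinates, with the row/column gadgets contributing $-k^{i+1}$; the zero-sum condition in each such coordinate, combined with the elementary estimate that a sum of $k$ powers of $k$ that are not all equal is strictly less than $k^{z+1}$ (where $k^z$ is the largest of them), forces all encoding gadgets in a row (respectively column) to select the same vertex. The base-$k$ encoding also keeps every coordinate at $(n+1)\log k$ bits, so the reduction stays polynomial. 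Without some mechanism of this kind, steps (3) and (5) of your plan cannot be executed.

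Moreover, your one concrete suggestion for taming spurious transversals---placing gadget points on a moment-curve-type configuration---points in an unhelpful direction: moment-curve configurations are the standard tool for guaranteeing general position (no $d+1$ points on a common hyperplane), which is the opposite of what is needed here, namely a rigidly controlled family of affine dependencies. Your alternative source problem is also not a plausible shortcut: the paper explicitly notes that its technique must differ radically from that of Knauer, Tiwary, and Werner~\cite{knauer2011computational}, because there one selects $d$ points from a single large set, whereas here a transversal selects only one point from each of a small number of sets. A naive reduction from their problem (say, taking $d$ copies of their point set together with $\{\mathbf{0}\}$) fails instantly: choosing the same point in every copy always yields an affinely dependent selection, hence a transversal exists regardless of the instance. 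So while you name the right source problem and correctly identify the backward direction as the crux, the key idea---uniformizing the affine coefficients via indicator coordinates and encoding vertex choices in base $k$---is missing, and that idea is the proof.
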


Observe that for finite point sets (and, more generally, for any 
non-convex sets), having no $(k-2)$-transversal does 
not a priori imply well-separation. 
The result of Theorem~\ref{thm:W1Hardness} bears a similarity with 
the following result, shown in~\cite{knauer2011computational}: 
given a point set $P$ in $\R^d$, is the origin contained 
in the affine hull of any $d$ points? Indeed, in our reduction in the proof 
of Theorem~\ref{thm:W1Hardness}, one of the point sets contains 
only the origin. However, our proof uses a radically different 
technique, as we have several point sets instead of one, and more 
importantly the number of points one can choose from is $k\leq d+1$, whereas 
in the proof in~\cite{knauer2011computational} the set $P$ 
contains fairly more than $d$ points.

\section{Well-separation and transversals}
\label{sec:wellsepandtrans}
Let us recall some definitions. Let $S_1, \dots, S_k \subset \R^d$ be $k$ sets in $d$ dimensions.
An \emph{$m$-transversal} of $S_1, \dots, S_k$ is an $m$-flat $h \subset \R^d$ (that is, an affine subspace of dimension $m$) with $h \cap S_i \neq \emptyset$ for $i = 1, \dots, k$. 
Transversals are intimately related to well-separation: the sets $S_1, \dots, S_k \subset \R^d$ are well-separated if and only if there is no $(k-2)$-transversal of their convex hulls. As mentioned in the introduction, this fact seems to be well known, but as we could not find a reference with all details for it, we give a short proof for the sake of completeness.
In particular, a $(k-2)$-transversal of the convex hulls is a certificate that $S_1, \dots, S_k$ are not well-separated.
For a given $(k-2)$-flat $h$, it can be checked in polynomial time whether $h$ is a $(k-2)$-transversal, yielding a proof that checking well-separation is in coNP.

\begin{lemma}
Let $S_1, \dots, S_k \subset \R^d$ be $k$ sets in $d$ dimensions.
Then $S_1, \dots, S_k$ are well-separated if and only if their convex hulls have no $(k-2)$-transversal.
\end{lemma}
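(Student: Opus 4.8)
The plan is to establish the two implications separately, each via its contrapositive, after reducing to the convex case. Since both well-separation and the existence of a $(k-2)$-transversal are statements purely about the convex hulls $C_i := \conv(S_i)$, I would replace each $S_i$ by $C_i$ and assume from the outset that all sets are convex. The single geometric tool used throughout is Radon's theorem, which translates affine dependence of $k$ points into a partition of $[k]$ whose two convex hulls intersect; this is exactly the bridge between the transversal condition (a dimension statement) and well-separation (an intersection statement).

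For the direction ``a $(k-2)$-transversal exists $\Rightarrow$ not well-separated'', I would start from a $(k-2)$-flat $h$ with $h \cap C_i \neq \emptyset$ for all $i$ and pick one point $p_i \in h \cap C_i$ for each $i \in [k]$. Because $h$ has dimension $k-2$, it contains at most $k-1$ affinely independent points, so the $k$ points $p_1,\dots,p_k$ are affinely dependent. Radon's theorem then yields a partition $[k] = I \sqcup J$ into two nonempty parts with $\conv\{p_i : i \in I\} \cap \conv\{p_j : j \in J\} \neq \emptyset$. Since $p_i \in C_i \subseteq \conv(S_I)$ and likewise for $J$, the hulls $\conv(S_I)$ and $\conv(S_J)$ meet, so $I$ witnesses that the family is not well-separated. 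This is the short and clean direction.

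The substantive direction is ``not well-separated $\Rightarrow$ a $(k-2)$-transversal exists''. Here I would fix a proper index set $I$ (with $J := [k]\setminus I$) and a point $q \in \conv(S_I) \cap \conv(S_J)$. Writing $q$ as a finite convex combination of points of $\bigcup_{i\in I} S_i$ and grouping the summands according to the set they come from, I would obtain $q = \sum_{i \in I'} \beta_i c_i$ with $c_i \in C_i$, $\beta_i > 0$, $\sum_i \beta_i = 1$, for some nonempty $I' \subseteq I$; symmetrically $q = \sum_{j\in J'} \gamma_j c_j$ for some nonempty $J' \subseteq J$. Equating the two representations produces a nontrivial affine dependence among the points $\{c_i : i \in I' \cup J'\}$ whose coefficients are positive on $I'$ and negative on $J'$, i.e.\ these $n := |I' \cup J'|$ points carry a genuine Radon partition, so their affine hull has dimension at most $n-2$. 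I would then choose an arbitrary representative $c_\ell \in C_\ell$ for each of the remaining $k-n$ indices $\ell \in [k]\setminus(I'\cup J')$; each new point raises the dimension of the affine hull by at most one, so $\aff\{c_1,\dots,c_k\}$ has dimension at most $(n-2)+(k-n) = k-2$ while still containing one point from every $C_i$. Finally (in the relevant regime $k \leq d+1$, where a $(k-2)$-flat fits inside $\R^d$) I would enlarge this flat to an exact $(k-2)$-flat whenever its dimension is strictly smaller; the enlarged flat is still a transversal.

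The main obstacle lies in the bookkeeping of this last direction: the Radon partition guaranteed by non-well-separation only involves the sets indexed by $I' \cup J'$, which need not be all of $[k]$, so the crux is verifying that the affine hull of \emph{one} representative per set still has dimension at most $k-2$ once the missing sets are filled in arbitrarily. The saving ``$-2$'' must come from a true affine dependence with nonempty positive and negative parts, so I would take care that both $I'$ and $J'$ are nonempty, which is precisely what the grouped convex-combination representation of $q$ provides.
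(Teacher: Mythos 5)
Your proposal is correct and follows essentially the same route as the paper: reduce to the convex case, use Radon's theorem (via affine dependence of one point per set on the $(k-2)$-flat) for one direction, and for the converse turn the two convex-combination representations of a common point into an affine dependence whose hull, after adding one representative per remaining set, has dimension at most $k-2$. If anything, your bookkeeping in the second direction is slightly more explicit than the paper's (which joins two flats $h_I$ and $h_{[k]\setminus I}$ at the common point $p$ and counts $(|I|-1)+(k-|I|-1)=k-2$), since you handle the sets not appearing in the convex combinations and the extension to an exact $(k-2)$-flat explicitly.
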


\begin{proof}
In the following, we assume without loss of generality that the sets $S_1, \dots, S_k$ are convex, that is, they are equal to their convex hulls.
Assume first that $S_1, \dots, S_k$ have a $(k-2)$-transversal~$h$.
Consider the intersection of the sets with $h$.
This gives a collection of $k$ sets $S'_1,\ldots,S'_k$ in a $(k-2)$-dimensional space, thus by Radon's theorem there is an index set $I \subset [k]$ such that the convex hulls of $S'_I$ and of $S'_{[k] \setminus I}$ intersect.
But then also the convex hulls of $S_I$ and of $S_{[k] \setminus I}$ intersect, and thus $S_1, \dots, S_k$ are not well-separated.

For the other direction, assume that $S_1, \dots, S_k$ are not well-separated, that is, there is an index set $I \subset [k]$ such that the convex hulls of $S_I$ and of $S_{[k] \setminus I}$ intersect.
Let $p$ be a point in this intersection.
The point $p$ can be written as a convex combination of points in $S_I$. 
Note that not only can we write it as a convex combination of 
points in $S_I$, but we can even ensure that in 
this combination, we use at most one point of each $S_i$, for $i\in I$. 
This is because the sets $S_i$ are convex and so instead of taking two individual points we can take a convex combination of them. This means that in particular, there is a $(|I|-1)$-transversal $h_I$ of $S_I$ which contains $p$.
The same holds for $S_{[k] \setminus I}$, giving a $(k-|I|-1)$-transversal $h_{[k] \setminus I}$ of $S_{[k] \setminus I}$ which contains $p$.
Then the affine hull of $h_I$ and $h_{[k] \setminus I}$ is a transversal of $S_1, \dots, S_k$ and has dimension at most
$|I|-1+k-|I|-1=k-2$.
\end{proof}

\section{Hyperplane Transversals in High Dimensions}

Let $S_1, \dots, S_k \subset \R^d$ be $k$ sets in $d$ dimensions,
where $d$ is not fixed.
Recall that a \emph{hyperplane transversal} of $S_1, \dots, S_k$ is a $(d-1)$-transversal.
Note that we do not assume the sets to be convex.
In particular, the sets can even be finite.
We consider the decision problem \textsc{HypTrans}: Given sets
$S_1, \dots, S_k$, decide if there is a hyperplane transversal for
them. There are different variants of \textsc{HypTrans}, depending on what
we require from the sets $S_i$. We consider the finite case
and the case of line segments. We also consider the optimisation formulation of \textsc{HypTrans}, that we name \textsc{MaxHyp}:  Given the sets
$S_1, \dots, S_k$, find a hyperplane that intersects as many of these sets as possible.

\subsection{Finite Case}

We begin with the case that all $S_i$ are finite point
sets. We provide an approximation algorithm for \textsc{MaxHyp} in the situation where every $S_i$ contains a single point, for $i = 1, \dots, k$. Note that in this situation, \textsc{HypTrans} can be solved greedily. We also provide some hardness results for \textsc{HypTrans} even in the restricted setting where every $S_i$ contains
at most two points, for $i = 1, \dots, k$.

\subsubsection{Singleton sets}

We assume that every $S_i$ contains a single point, for $i = 1, \dots, k$. We denote by $P$ the point set that is the union of all $S_i$.

\approxAlgo*

\begin{proof}
If $k \leq d$, we output a hyperplane that contains
all points of $P$. 
Otherwise if $k>d$, let $f(k) = \log k / \log \log k$.
If $f(k) < d$, we pick $d$ points from $P$, and
we output a hyperplane through these points.
If $f(k) \geq d$, 
we partition $P$ into disjoint groups of size $f(k)$. 
In each group, we compute all hyperplanes that go through
some $d$ points from the group.
Among all hyperplanes for all groups, we output the hyperplane
that contains the most points in $P$.
For each group, we have 
$O(f(k)^d)=O(f(k)^{f(k)})=O(k)$ hyperplanes to consider. 
Thus, the algorithm runs in polynomial time in $d$ and $k$.

We now analyze the approximation guarantee.
If $f(k) < d$, then we output a
hyperplane with at least $d > f(k) \geq f(k) \text{OPT}/k$ points,
since $\text{OPT} \leq k$.
If $f(k) \geq d$,
we let $h$ be an optimal hyperplane.
If $h$ contains at least $d$ points in a single group,
then we output an optimal solution. Otherwise,
$h$ contains less than $d$ 
points in each group, so
$\text{OPT} \leq d(k/f(k))$. This means that
$d \geq f(k) \text{OPT}/k$, and the claim follows
from the fact that our solution contains at least $d$ points.
\end{proof}

\subsubsection{Sets of at most two points}
\label{subsubsec:twopoints}

Here, we restrict ourselves to the situation  
that every set $S_i$ contains
at most two points, for $i = 1, \dots, k$
We prove that this version of \textsc{HypTrans} is NP-hard, 
with a reduction from \textsc{SubsetSum}.
In \textsc{SubsetSum}, we are given $n + 1$ integers
$a_1, \dots, a_n, b \in \Z$,
and the goal is to
decide whether there exists an index set $I \subseteq \{1, \dots, n\}$
with $\sum_{i \in I} a_i = b$. It is well-known
that \textsc{SubsetSum} is (weakly) NP-complete.
The reduction in this section proofs only a weaker version of Theorem~\ref{thm:NPhard2points}. 
Nevertheless, we think this reduction is shorter and contains the crucial ideas.
In Section~\ref{sec:secondreduction} we then give a second reduction which shows the theorem as stated in the introduction. However the second reduction is more technical.

Given an input $a_1, \dots, a_n, b \in \Z$ for \textsc{SubsetSum},
we create an input $S_1, \dots, S_{n + 2} \subset \R^{n + 1}$
for \textsc{HypTrans}, as follows. Note that the number of sets and the dimension are differing by exactly one.
First, we define $2n + 1$ vectors 
$u, v_1, \dots, v_n, w_1, \dots, w_n \in \R^{n + 1}$, by setting
\begin{align*}
u(1) = -b &\text{ and } u(j) = -1, &&\text{ for } j = 2, \dots, n+1,\\
v_i(1) = a_i &\text{ and } v_i(j) = \delta_{i+1, j}, 
&&\text{ for } j = 2, \dots, n+1, i = 1, \dots, n, \text{ and}\\
w_i(1) = 0 &\text{ and } w_i(j) = \delta_{i+1, j}, 
&&\text{ for } j = 2, \dots, n+1, i = 1, \dots, n.
\end{align*}
Here, for $i, j \in \Z$,
\[
  \delta_{i,j} = \left\{\begin{array}{ll}1, & \text{if } i = j, \\ 0, & 
  \text{if } i \neq j,\end{array}\right.
\]
denotes the \emph{Kronecker delta}.
Using these vectors, we define the input for
\textsc{HypTrans} as $S_1 =  \{v_1, w_1\}, \dots, S_n = \{v_n, w_n\},
S_{n + 1} = \{u\},$ and $S_{n + 2} = \{\mathbf{0}\}$, where 
$\mathbf{0}$ is the origin of $\R^{n + 1}$.

\begin{claim}
We have that $a_1, \dots, a_n, b$ is a yes-input for
\textsc{SubsetSum} if and only if $S_1, \dots, S_{n +2}$ is a
yes-input for \textsc{HypTrans}.
\end{claim}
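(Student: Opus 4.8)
The plan is to read a \textsc{SubsetSum} solution directly off the normal vector of any hyperplane transversal, and conversely. First I would exploit the set $S_{n+2} = \{\mathbf{0}\}$: any transversal $h$ must contain the origin, so $h$ is a linear hyperplane $h = \{x \in \R^{n+1} : \langle c, x\rangle = 0\}$ for some nonzero normal $c = (c_1, \dots, c_{n+1})$, with zero constant term. I would then record the membership conditions imposed by the remaining sets. Since $\langle c, v_i\rangle = a_i c_1 + c_{i+1}$ and $\langle c, w_i\rangle = c_{i+1}$, the requirement $h \cap S_i \neq \emptyset$ becomes ``$a_i c_1 + c_{i+1} = 0$ or $c_{i+1} = 0$'', and the requirement $u \in h$ becomes $b c_1 + \sum_{i=1}^{n} c_{i+1} = 0$.

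The key normalization step is to show $c_1 \neq 0$. If $c_1 = 0$, then $\langle c, v_i\rangle = \langle c, w_i\rangle = c_{i+1}$, so intersecting each $S_i$ forces $c_{i+1} = 0$ for all $i$; together with $c_1 = 0$ this makes $c = \mathbf{0}$, contradicting that $c$ is a hyperplane normal. As the hyperplane is unchanged under scaling $c$, I may assume $c_1 = 1$. Now the $u$-condition reads $\sum_{i=1}^{n} c_{i+1} = -b$, and each set condition forces $c_{i+1} \in \{0, -a_i\}$.

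For the forward direction (transversal $\Rightarrow$ \textsc{SubsetSum}), I would set $I = \{i : c_{i+1} \neq 0\}$; then $c_{i+1} = -a_i$ for $i \in I$ and $c_{i+1} = 0$ otherwise, so $\sum_{i \in I} a_i = -\sum_{i=1}^n c_{i+1} = b$. For the converse, given $I$ with $\sum_{i\in I} a_i = b$, I would take $c_1 = 1$, $c_{i+1} = -a_i$ for $i \in I$, and $c_{i+1} = 0$ otherwise, and verify that the hyperplane $\langle c, x\rangle = 0$ contains $\mathbf{0}$ and $u$, contains $v_i$ for $i \in I$, and contains $w_i$ for $i \notin I$; hence it meets every set and is a transversal.

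The construction is short, so there is no deep obstacle; the care lies in the edge cases. The most important of these is ruling out $c_1 = 0$, which is exactly what pins the reduction down: without the set $\{\mathbf{0}\}$ one could not force the constant term to vanish, and without a nonzero first coordinate the sets $S_i$ could not encode a genuine choice. A minor secondary point is that indices with $a_i = 0$ make the two alternatives $c_{i+1} = -a_i$ and $c_{i+1} = 0$ coincide; this is harmless, since such $i$ contribute $0$ to the sum and the definition $I = \{i : c_{i+1} \neq 0\}$ remains well defined.
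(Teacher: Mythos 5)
Your proof is correct, but it takes a genuinely different route from the paper's. You argue in the dual: since every transversal must meet $S_{n+2}=\{\mathbf{0}\}$, it is a linear hyperplane $\{x \in \R^{n+1} : \langle c,x\rangle =0\}$, each requirement $h\cap S_i\neq\emptyset$ becomes a disjunction on the normal vector, and after ruling out $c_1=0$ (your key step, which is sound: $c_1=0$ would force $c_{i+1}=0$ for all $i$ and hence $c=\mathbf{0}$) the constraints read $c_{i+1}\in\{0,-a_i\}$ together with $\sum_{i=1}^n c_{i+1}=-b$, so the certifying subset is simply $I=\{i : c_{i+1}\neq 0\}$. The paper instead argues in the primal: it fixes chosen points $x_i\in S_i$, writes $\mathbf{0}$ as an affine combination $\sum_{i=1}^n\lambda_i x_i+\lambda_{n+1}u$ with $\sum_{i=1}^{n+1}\lambda_i=1$, uses coordinates $2,\dots,n+1$ to force $\lambda_1=\dots=\lambda_{n+1}=1/(n+1)$, and then extracts $\sum_{i\in I}a_i=b$ from the first coordinate. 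Your route buys a bit of extra rigor: the paper's assertion that lying on a common hyperplane lets one write $\mathbf{0}$ in the affine span of $x_1,\dots,x_n,u$ tacitly uses that these $n+1$ points are affinely independent (so that their affine hull is the entire hyperplane); this is true here but is not verified in the paper, whereas your argument needs only the existence of a nonzero normal vector. What the paper's affine-coefficient formulation buys in exchange is reusability: the same bookkeeping with coefficients $\lambda_\ell$ is the template for the heavier reductions later in the paper (the strong NP-hardness reduction from \textsc{EqualBinPacking} and the W[1]-hardness proof), while your disjunction-on-the-normal trick specifically exploits the fact that each $S_i$ has at most two points.
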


\begin{proof}
First, suppose that $a_1, \dots, a_n, b$ is a yes-input for \textsc{SubsetSum},
and let $I \subset [n]$ be an index set with $\sum_{i \in I} a_i = b$.
Then, we define a point set $x_1, \dots, x_{n + 2}$ with $x_i \in S_i$ as 
follows: for $i = 1, \dots, n$, if
$i \in I$, we set $x_i = v_i$, and if $i \not\in I$, we set
$x_i = w_i$.
Furthermore, we set $x_{n + 1} = u$ and $x_{n + 2} = \mathbf{0}$.
Then, the points $x_1, \dots, x_{n + 2}$ lie on a common hyperplane of~$\R^{n+1}$.
For this, it suffices to check that
\[
\mathbf{0} = \sum_{i = 1}^{n + 1} \frac{1}{n + 1} x_i,
\]
which follows immediately from the definitions and the choice of the $x_i$.
Thus, there is a hyperplane transversal for $S_1, \dots, S_{n + 2}$, as 
desired.

Conversely, suppose that $S_1, \dots, S_{n + 2}$ is a yes-input
for \textsc{HypTrans}. Thus, there is a choice $x_i \in S_i$, 
for $i = 1, \dots, n + 2$, 
such that $x_1, \dots, x_{n + 2}$,
lie on a common hyperplane. Obviously, it is $x_{n + 1} = u$ and 
$x_{n + 2} = \mathbf{0}$, so we can conclude that $\mathbf{0}$
is in the affine span of $x_1, \dots, x_n, u$ and can be written
as
\[
\mathbf{0} = \sum_{i = 1}^n \lambda_i x_i + \lambda_{n + 1} u,
\]
where $\lambda_i \in \R$ with $\sum_{i = 1}^{n + 1} \lambda_i = 1$.
Let $I \subseteq [n]$ be the set of those indices $i$ for which
$x_i = v_i$.
By inspecting the coordinates and applying the
definitions, we get the following
system of equations.
The first coordinate implies
\begin{align*}
\sum_{i \in I} \lambda_i a_i &= \lambda_{n + 1} b.
\end{align*}
Furthermore for every coordinate $i = 2, \ldots n$, we get 
\begin{align*}
\lambda_i &= \lambda_{n + 1} , &\text{ for } i = 1, \dots, n.
\end{align*}
From the second equation, it follows that $\lambda_1 = \dots = \lambda_{n + 1}$. Since
$\sum_{i = 1}^{n + 1} \lambda_i = 1$, this implies that
$\lambda_i = 1/(n + 1)$, for $i = 1, \dots, n + 1$. Thus, 
the first equation implies that  $a_1, \dots, a_n, b$ is a yes-input
for \textsc{SubsetSum}, with $I$ as the certifying index set.
\end{proof}

\subsubsection{A second reduction}
\label{sec:secondreduction}
Now, we prove that \textsc{HypTrans} is strongly NP-hard even if every set contains at most two points. 
We reduce from \textsc{BinPacking}.
Our reduction will pass through two intermediate problems 
\textsc{EqualBinPacking} and \textsc{FlatTrans}.

We start by defining all the involved problems.

In \textsc{BinPacking}, we are given a sequence
$w_1, \dots, w_n \in  \mathbb{Z}_+$
of \emph{weights},
a number $k$ of \emph{bins} and a 
\emph{capacity} $b\in\mathbb{Z}_+$.
The goal is to decide whether there is a partition of $n$ items
with weights $w_1, \dots, w_n$ into $k$ bins such that in 
each bin the total weight of the items does not exceed the capacity $b$.
It is known that \textsc{BinPacking} is strongly NP-hard.
In \textsc{EqualBinPacking}, we are given the same input, but now 
the goal is to decide whether there exists a partition of the items 
into the bins such that in each bin the total weight of the items 
equals \emph{exactly} the capacity.
Note that \textsc{BinPacking} can easily be reduced to 
\textsc{EqualBinPacking} by adding the appropriate number 
of elements of weight 1, so 
\textsc{EqualBinPacking} is strongly NP-hard as well.

Finally, in \textsc{FlatTrans}, we are given $m$ sets 
$S_1,\dots,S_{m}$ in $\R^{d}$, where $m$ and $d$ are both part of the 
input, and the goal is to decide whether there is an $(m-2)$-transversal. In 
other words, the question is whether there exists an $(m-2)$-dimensional 
affine subspace $h$ such that for all $i \in \{0,\dots,m-1\}$, we 
have $S_i\cap h\neq\emptyset$.
Note that \textsc{HypTrans} with $k=d+1$ is the 
same as \textsc{FlatTrans} with $m=d+1$.

First we show that \textup{\textsc{FlatTrans}} is strongly NP-hard even if all $S_i$ consists of at most two points and $S_m=\{\mathbf{0}\}$.

\StrongNPhardTwoPoints*

\begin{proof}
We reduce from \textsc{EqualBinPacking}.
Given an input $w_1, \dots, w_n, k, b$
to \textsc{EqualBinPacking}, we construct an instance of 
\textsc{FlatTrans} as follows:
we set the dimension $d = k + n + kn$ and the number of sets $m = kn +2$.
For every pair $(i, j) \in [n] \times [k]$, define the vectors
\begin{align*}
&v_{i,j}(x):=\begin{cases} 
      w_i, & \text{if } x = j, \\
      1, & \text{if } x = k + i, \\
      1, & \text{if } x = n + k +  (i - 1)k + j, \\
      0, & \text{otherwise}, 
   \end{cases}
   \text{, } \\
&u_{i,j}(x):=\begin{cases} 
      0, & \text{if } x = j, \\
      0, & \text{if }x = k + i, \\
      1, & \text{if }x = n + k + (i - 1)k +  j, \\
      0, & \text{otherwise}. 
   \end{cases}
\end{align*}
Here, we denote by $x \in \{1, \dots, n+k+kn\}$
the entries of the vector, e.g., 
the first entry of $v_{i,j}$ is denoted by $v_{i,j}(1)$.
Furthermore, let $c$ be the vector whose entries $c(x)$ are 
$-b$, for $1 \leq x \leq k$, and $-1$ everywhere else.
Now set $S_{kn+2}=\{\mathbf{0} \}$, and $S_l=\{v_{i,j},u_{i,j} \}$, 
for $l=(i - 1)k+j$, $i = 1, \dots, n$, $j = 1, \dots, k$ 
(note that this choice of $l$ 
just gives that the order of the $l$'s corresponds 
to the lexicographic order of the $(i,j)$'s) and $S_{kn+1}=\{c\}$.
All these vectors can be constructed in polynomial time.

\begin{claim}
	There is a $kn$-transversal of the sets 
	$S_1,\dots,S_{kn+2}$, if and only if there 
	is a valid solution for the \textsc{EqualBinPacking} instance.
\end{claim}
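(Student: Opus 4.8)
The plan is to exploit the special structure of the construction, in which one set is the singleton $\{\mathbf{0}\}$, to reduce the existence of a $kn$-transversal to a purely combinatorial linear-dependence condition. First I would observe that a $kn$-flat meeting all $m = kn+2$ sets is equivalent to a choice of representatives $x_l \in S_l$ that is \emph{affinely dependent}: since $kn+2$ points in $\R^d$ with $d = k+n+kn \ge kn = m-2$ lie in a common $(m-2)$-flat precisely when they fail to be affinely independent, and any such dependent family can be extended to an actual $kn$-flat containing all of them (hence meeting every $S_l$). Because one representative is forced to be $\mathbf{0}$ (from $S_{kn+2}$) and another forced to be $c$ (from $S_{kn+1}$), affine dependence of the $m$ representatives is the same as linear dependence of the $m-1$ vectors $c, x_{1,1}, \dots, x_{n,k}$, where each $x_{i,j} \in \{v_{i,j}, u_{i,j}\}$.

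Next I would inspect a candidate relation $\mu c + \sum_{i,j}\mu_{i,j} x_{i,j} = \mathbf{0}$ coordinate by coordinate, using the block structure of the vectors. The $kn$ \emph{pair coordinates} $x = n+k+(i-1)k+j$ are supported only by $c$ (entry $-1$) and by the pair $\{v_{i,j}, u_{i,j}\}$ (each with entry $1$); this immediately forces $\mu_{i,j} = \mu$ for every $(i,j)$. Hence any nontrivial relation has $\mu \ne 0$, and after normalizing $\mu = 1$ the relation collapses to the single equation $c + \sum_{i,j} x_{i,j} = \mathbf{0}$. This is the crux: the pair coordinates rigidify the dependence so that the only remaining freedom is the choice of $v_{i,j}$ versus $u_{i,j}$.

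I would then read the surviving coordinates of $c + \sum_{i,j} x_{i,j} = \mathbf{0}$ as bin-packing constraints, interpreting $x_{i,j} = v_{i,j}$ as ``item $i$ is placed in bin $j$''. The $n$ \emph{item coordinates} $x = k+i$ force $|\{j : x_{i,j} = v_{i,j}\}| = 1$, so each item lies in exactly one bin, and the $k$ \emph{bin coordinates} $x = j$ force $\sum_{i : x_{i,j} = v_{i,j}} w_i = b$, so each bin has total weight exactly $b$. Thus a valid \textsc{EqualBinPacking} assignment yields representatives satisfying $c + \sum_{i,j} x_{i,j} = \mathbf{0}$ (the forward direction, where I simply verify the three coordinate blocks), and conversely any $kn$-transversal supplies affinely dependent representatives whose forced relation $c + \sum_{i,j} x_{i,j} = \mathbf{0}$ encodes exactly such an assignment.

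The main obstacle I anticipate is the first step: justifying cleanly that a $kn$-transversal is equivalent to affinely dependent representatives, arguing both that affine dependence lets me construct an honest $(m-2)$-flat (using $d \ge m-2$) and that, conversely, any $(m-2)$-flat containing these $m$ points renders them affinely dependent. Once the pair-coordinate rigidity argument is in place, the remaining coordinate bookkeeping is routine, so I expect the write-up to be short.
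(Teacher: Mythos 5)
Your proof is correct and takes essentially the same route as the paper: both arguments reduce the existence of a $kn$-transversal to a nontrivial linear dependence among the representatives (using that $\mathbf{0}$ and $c$ are forced choices), exploit the $kn$ pair coordinates to force all coefficients to be equal (hence nonzero, allowing normalization), and then read the item coordinates as ``each item in exactly one bin'' and the bin coordinates as ``each bin filled exactly to capacity.'' Your explicit handling of the normalization step (deriving $\mu_{i,j}=\mu$ before concluding $\mu \neq 0$) spells out a point the paper treats tersely, but the underlying argument is the same.
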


Assume first that there is a solution for \textsc{EqualBinPacking}.
For each $S_l$, $1\leq l\leq kn$, $l=(i-1)k+j$, 
choose $p_l = v_{i,j}$, if item $i$ is placed in bin $j$, 
and choose $p_l = u_{i,j}$, otherwise. 
Furthermore, set $p_{kn+1} = c$ and $p_{kn+2} = \mathbf{0}$.

We claim that there exist coefficients 
$\lambda_l$ such that 
\begin{equation}\label{equ:sum_zero}
\sum_{l=1}^{kn+1}\lambda_l p_l=\mathbf{0} 
\end{equation}
and 
\begin{equation}\label{equ:convex_comb}
\sum_{l=1}^{kn+1}\lambda_l=kn + 1.
\end{equation}
This implies the claim, because
then $\mathbf{0}$ can be written as a non-trivial linear
combination of the other points. Set $\lambda_l:=1$, for all $l$.
Then, (\ref{equ:convex_comb}) is certainly satisfied.
Consider the $x$'th row of (\ref{equ:sum_zero}), where $1\leq x \leq k$.
By construction, and since we assumed a valid 
solution for the bin packing problem, this row evaluates to
\[
\left(\sum_{i:\text{item $i$ in bin $x$}}w_i\right) - b=0.\]
Similarly, for $k+1\leq x \leq k+n$, the $x$'th row evaluates to
$1-1=0$,
since each item is placed in exactly one bin.
All remaining rows evaluate to $1-1=0$, 
and thus (\ref{equ:convex_comb}) is also satisfied.

Assume now that there exist coefficients $\lambda_l$ that
satisfy (\ref{equ:sum_zero}) and (\ref{equ:convex_comb})
(which must be the case of $\mathbf{0}$ can be written as a non-trivial
linear combination of the other points).
From the $x$'th rows in (\ref{equ:sum_zero}) with $x>k+n$, 
we get $\lambda_l-\lambda_{kn+1}=0$, for $1\leq l \leq kn$, 
and thus $\lambda_1=\dots =\lambda_{kn+1}$.
Together with (\ref{equ:convex_comb}), we thus get 
$\lambda_l=1$, for all $l$.
Put item $i$ into bin $j$ if and only if 
$p_l=v_{i,j}$ for $l=(i-1)k+j$.
Analogous to above we get from the $x$'th rows of 
(\ref{equ:sum_zero}),
for $k+1\leq x \leq k+n$, that each item is placed into exactly one bin.
Further, we get from the $x$'th rows of 
(\ref{equ:sum_zero}), for
 $1\leq x \leq k$, that each bin is filled exactly to capacity.
Thus, we have a valid solution for 
\textsc{EqualBinPacking}, as desired.
\end{proof}
Now, there is only one reduction remaining to show that \textup{\textsc{HypTrans}} is strongly NP-hard even when $S_m=\{\mathbf{0}\}$ and $S_i$ consists of at most two points for all $i = 1, \ldots, m-1$:

\NPhardTwoPoints*

\begin{proof}
We reduce from \textsc{FlatTrans}.
Let us assume that $S_m = \{\mathbf{0}\}$ and let $S_1,\ldots,S_{m}\subset\R^d$ be the sets in the instance of \textsc{FlatTrans}, and assume that $m-1<d$.
We construct sets in $\R^{d+2}$ as follows:
First, for each point $p$ in some set $S_i$ we define the point $p'=(p,0,0)$ and place it in the set $S'_i$.
For $m\leq i\leq d+2$, define $S'_i$ as the set consisting only of the point $s'_i=(0,\ldots,0,1,i)$. Additionally, let $S'_{d+3}:=\{\mathbf{0}\}$.

We claim that $S_1,\ldots,S_{m}\subset\R^d$ have an $(m-2)$-transversal, if and only if $S'_0,S'_1,\ldots,S'_{d+3}\subset\R^{d+2}$ can be transversed by a hyperplane.

Assume first that $S_1,\ldots,S_{m}\subset\R^d$ indeed have an $(m-2)$-transversal, that is, there are points $p_i\in S_i$ and parameters $\lambda_i$ such that $\sum_{i=1}^{m-1}\lambda_i p_i=\mathbf{0}$ and $\sum_{i=1}^{m-1}\lambda_i=1$.
Choosing the corresponding points $p'_i$ and setting $\lambda'_i=\lambda_i$ for $i\leq m-1$ and $\lambda'_i=0$ for $i>m-1$ we get $\sum_{i=1}^{d+2}\lambda'_i p'_i=\mathbf{0}$ and $\sum_{i=1}^{d+2}\lambda'_i=1$, that is, $S'_1,\ldots,S'_{d+3}\subset\R^{d+2}$ can be transversed by a hyperplane.

Assume now that $S'_1,\ldots,S'_{d+3}\subset\R^{d+2}$ can be transversed by a hyperplane, that is, there are points $p'_i\in S'_i$ and parameters $\lambda'_i$, such that $\sum_{i=1}^{d+2}\lambda'_i p'_i=\mathbf{0}$ and $\sum_{i=1}^{d+2}\lambda'_i=1$.
The second to last row of the first equation evaluates to $\sum_{i=m}^{d+2}\lambda'_i=0$, and we thus have $\sum_{i=1}^{m-1}\lambda'_i=1$.
Set $p_i=p'_i$ and $\lambda_i=\lambda'_i$.
Then $\sum_{i=1}^{m-1}\lambda_i=1$ by the observation above.
Further, $\sum_{i=1}^{m-1}\lambda_i p_i=\mathbf{0}$ by the first $m$ rows of the first equation.
Thus, $S_1,\ldots,S_{m}\subset\R^d$ can be transversed by a $(m-2)$-flat.
\end{proof}

\subsection{Line segments}

In this section, we will show that deciding whether there is a hyperplane transversal for $d$ line segments and the origin in $\R^d$, where $d$ is not fixed, is NP-hard.

\NPhardsegments*

We will reduce this to one of the previous cases shown, that is, 
to the restricted version of \textsc{HypTrans} where the sets $S_i$ contain at most two points, see Section \ref{subsubsec:twopoints}.
This is done with the help of a gadget
that enforces that every hyperplane transversal must use
one of the two endpoints of a given line segment. The 
gadget is shown in Figure~\ref{fig:gadget}.
\begin{figure}[h]
	\centering
	\includegraphics{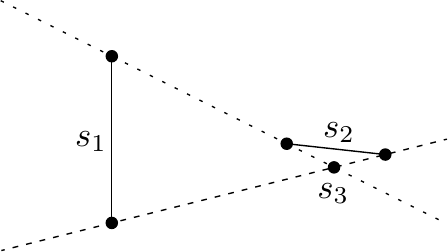}
	\caption{Every hyperplane transversal through $s_1$, $s_2$, $s_3$
	must choose an endpoint of $s_1$ (and of $s_2$).}
	\label{fig:gadget}
\end{figure}

Given a collection of $k$ sets $S_1, \ldots, S_k$ of size at most two, for each set we take the line segment formed by its points as $s_1$, the origin as point $s_3$, and we construct the corresponding new segment $s_2$ using the gadget presented in Figure~\ref{fig:gadget}. This gives a family $S$ of $2k$ line segments that all lie in a $k$-dimensional space.
In order to prove our result, we need to lift our construction to $\R^{2k}$.
Let $A_i$,$B_i$ in $\R^k$ denote the endpoints of the $i$'th original segment ($s_1$ in Figure~\ref{fig:gadget}) and let $G_i$,$H_i$ in $\R^k$ denote the endpoints of the $i$'th gadget segment ($s_2$ in Figure~\ref{fig:gadget}).
Denote by $\varepsilon_j$ the vector in $\R^k$ which is $0$ everywhere except in the $j$'th entry, where it is $\varepsilon$.
Further, we write $\mathbf{0}^k$ for the zero vector in $\R^k$.
We now lift the points $A_i,B_i,G_i,H_i$ to $\R^{2k}$ as follows:
\[A'_i:=\binom{A_i}{\mathbf{0}^k}, B'_i:=\binom{B_i}{\mathbf{0}^k}, G'_i:=\binom{G_i}{\varepsilon_i}, H'_i:=\binom{H_i}{\varepsilon_i}.\]
We denote the corresponding set of line segments $A'_iB'_i$ and $G'_iH'_i$ in $\R^{2k}$ by $S'$.

\begin{claim}
$S\subset\R^k$ has a hyperplane transversal if and only if $S'\subset\R^{2k}$ does.
\end{claim}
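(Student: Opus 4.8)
The plan is to exploit the fact that the origin belongs to both families (it is the point $s_3$ common to all gadgets), so every transversal is a linear hyperplane through $\mathbf 0$ and can be encoded by a single nonzero normal vector: $a\in\R^k$ for $S$, and $(a,b)\in\R^k\times\R^k$ for $S'$. With this encoding a segment is hit exactly when the inner products of its two endpoints with the normal have opposite (weak) signs. The only effect of the lift is on the last $k$ coordinates of a point: the endpoints $A_i,B_i$ of the original segments receive the block $\mathbf 0^k$, whereas the endpoints $G_i,H_i$ of the gadget segments receive the block $\varepsilon_i$. Both directions of the equivalence will be obtained by translating this sign condition between the two normals.

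For the direction $S\Rightarrow S'$, I would take a transversal of $S$ with normal $a$ and lift it to the normal $(a,\mathbf 0^k)$. Since this normal has a zero last block, $\langle(a,\mathbf 0^k),(p,z)\rangle=\langle a,p\rangle$ for every lifted point $(p,z)$, independently of whether $z=\mathbf 0^k$ or $z=\varepsilon_i$. Hence every sign condition satisfied by $a$ on a segment of $S$ is inherited verbatim by $(a,\mathbf 0^k)$ on the corresponding lifted segment, and the origin is transversed trivially; so $(a,\mathbf 0^k)$ is a transversal of $S'$. This direction is routine.

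For the converse $S'\Rightarrow S$, I would start from a transversal of $S'$ with normal $(a,b)$ and project to the candidate normal $a\in\R^k$. First, $a\neq\mathbf 0$: if $a=\mathbf 0$, then both endpoints of every gadget segment evaluate to the same number $\langle b,\varepsilon_i\rangle$, which forces $b_i=0$ for all $i$ and hence $(a,b)=\mathbf 0$, a contradiction. Given $a\neq\mathbf 0$, the hyperplane $\{x:\langle a,x\rangle=0\}$ at once transverses the origin and all original segments, because those lifted endpoints carry a zero last block and therefore reproduce their $\R^k$ inner products. The remaining case, and the step I expect to be the main obstacle, is the gadget segments: hitting $G'_iH'_i$ only guarantees that $\langle a,G_i\rangle$ and $\langle a,H_i\rangle$ straddle the shifted value $-\langle b,\varepsilon_i\rangle$ rather than $0$, so the projected hyperplane need not meet $G_iH_i$ in $\R^k$ a priori. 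To close this gap I would use the defining property of the gadget in Figure~\ref{fig:gadget}---that any transversal of $s_1,s_2,s_3$ must pass through an endpoint of $s_1$---which pins the hyperplane to an endpoint configuration, together with the freedom to choose $\varepsilon$ small enough that the per-coordinate shift $\langle b,\varepsilon_i\rangle$ cannot destroy this configuration. Making this interaction between each gadget segment and its original segment precise, through the dedicated coordinate $\varepsilon_i$, is the technical heart of the proof.
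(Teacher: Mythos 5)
Your forward direction is correct, and your dual set-up is sound: since the origin belongs to both families, every transversal is a linear hyperplane, a segment is met exactly when the inner products of its endpoints with the normal have product at most $0$, and the normal $(a,\mathbf{0}^k)$ inherits all sign conditions. This is a genuinely different (and cleaner) language than the paper's, which argues in the primal throughout, encoding transversality by affine combinations as in \eqref{eq:origin}--\eqref{eq:lift}. But the converse, which you explicitly leave open, is a genuine gap, and neither of your proposed repairs can close it. The ``small $\varepsilon$'' idea fails because $\varepsilon$ is fixed when $S'$ is constructed, while the normal $(a,b)$ is adversarial: the shift $\langle b,\varepsilon_i\rangle=\varepsilon b_i$ can be any real number, no matter how small $\varepsilon$ is. And the endpoint-forcing property of the gadget (Figure~\ref{fig:gadget}) is a statement about hyperplanes through the origin in $\R^k$ that already meet $s_1$, $s_2$ and $s_3$; you cannot invoke it for the projected hyperplane, since whether that hyperplane meets the gadget segments is exactly what is in question.

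In fact, your own formulation shows that the gap cannot be closed, because the claim as stated is false. The gadget satisfies $G_i=\alpha_iA_i$, $H_i=\beta_iB_i$ with $\alpha_i\beta_i<0$; this sign is precisely what forces endpoints, since meeting $A_iB_i$ requires $\langle a,A_i\rangle\langle a,B_i\rangle\le 0$ while meeting $G_iH_i$ requires $\alpha_i\beta_i\langle a,A_i\rangle\langle a,B_i\rangle\le 0$, so together $\langle a,A_i\rangle\langle a,B_i\rangle=0$. Hence a transversal of $S$ must pass through an endpoint of every original segment. In $S'$, however, the gadget constraints are vacuous: for any $a\neq\mathbf{0}$ whose hyperplane meets every original segment anywhere, setting $\varepsilon b_i:=-\langle a,G_i\rangle$ makes the hyperplane with normal $(a,b)$ contain every $G'_i$, hence a transversal of $S'$. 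Concretely, for $k=2$ take $A_1B_1$ from $(1,0)$ to $(0,1)$ and $A_2B_2$ from $(1,1)$ to $(1,-1)$. The line $x=2y$ crosses both segments in their interiors, so $S'$ has a hyperplane transversal; but the only lines through the origin and an endpoint of $A_1B_1$ are the two coordinate axes, which contain no endpoint of $A_2B_2$, so $S$ has no transversal. Note that the paper's own proof of this direction breaks at exactly the step you singled out: it solves two equations for an unconstrained real $\gamma_j$ and places the point $\gamma_jG_j+(1-\gamma_j)H_j$ in the affine hull, but never verifies $\gamma_j\in[0,1]$; a short computation shows $\gamma_j\notin[0,1]$ whenever $l_j\in(0,1)$, i.e., whenever the lifted transversal crosses the $j$'th original segment in its interior. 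So you identified the technical heart correctly; what you could not prove is in fact not provable.
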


\begin{proof}
We will prove this by explicitly computing affine combinations of 
points on the line segments that give us the required transversals.
In this setting, $S\subset\R^k$ has a hyperplane transversal if 
and only if there are real numbers 
$\lambda_i, \gamma_i, \mu_j^{(i)}$, with 
$i \in [k]$, $j \in \{0, \dots, k\}$ 
and the following properties
\begin{equation}\label{eq:origin}
\sum_{i = 1}^k \mu_0^{(i)}\big(\lambda_i A_i + (1 - \lambda_i)B_i\big)
=\mathbf{0}, \\  
\sum_{i = 1}^k \mu_0^{(i)}=1;
\end{equation}
and for all $j \in \{1, \dots, k\}$
\begin{equation}\label{eq:gadgets}
\sum_{i=1}^k \mu_j^{(i)}\big(\lambda_i A_i + (1 - \lambda_i)B_i\big)
=\gamma_jG_j+(1-\gamma_j)H_j,  \\
\sum_{i = 1}^k \mu_j^{(i)}=1.
\end{equation}

Here, the $\lambda_i$ and $\gamma_i$ fix points on the segments, $\mu_0^{(i)}$ write the origin (Equation \eqref{eq:origin}) and the $\mu_j^{(i)}$ the points on the gadget segments (Equation \eqref{eq:gadgets}) as affine combinations of the points on the reduction segments.

Similarly, $S'\subset\R^{2k}$ has a hyperplane transversal if and only if there are real numbers $l_i, g_i, m^{(i)}, n^{(i)}$, with $i\in[k]$ with the following property:
\begin{equation}
\begin{split}
\label{eq:lift}
&\sum_{i = 1}^k m^{(i)}\big(l_i A'_i + (1 - l_i)B'_i\big) + 
\sum_{i = 1}^k n^{(i)}\big(g_i G'_i + (1 - g_i)H'_i\big) = \mathbf{0}, \\
&\sum_{i=1}^k m^{(i)}+n^{(i)}=1.
\end{split}
\end{equation}

Here, the $l_i$ and $g_i$ fix points on the segments and the $m^{(i)}$ and $n^{(i)}$ write the origin as an affine combination of these points.

Assume first that $S\subset\R^k$ has a hyperplane transversal.
Then Equation \eqref{eq:lift} can be satisfied by setting $l_i=\lambda_i, m^{(i)}:=\mu_0^{(i)}, n^{(i)}:=0, g_i:=0$.
Thus, if $S\subset\R^k$ has a hyperplane transversal then so does $S'\subset\R^{2k}$.

As for the other direction, assume that $S'\subset\R^{2k}$ has a hyperplane transversal.
Note that the $(k+i)$'th row of Equation \eqref{eq:lift} reduces to $n^{(i)}\varepsilon=0$, so in particular we must have $n^{(i)}=0$ for every $i\in\{1,\ldots,k\}$.
Thus, we may set $\lambda_i:=l_i$ and $\mu_0^{(i)}:=m^{(i)}$ and Equation \eqref{eq:origin} follows.
As for Equation \eqref{eq:gadgets}, fix some $j\in\{1,\ldots,k\}$ and note that by the construction of the gadget segments there exist real numbers $\alpha_j$ and $\beta_j$ such that $G_j=\alpha_jA_j$ and $H_j=\beta_jB_j$.
Pick real numbers $\gamma_j$ and $x_j$ that satisfy the following two equations:
\begin{equation}
x_j\lambda_j=(1+x_j)\gamma_j\alpha_j,\\ \text{ and } \\
x_j(1-\lambda_j)=(1+x_j)(1-\gamma_j)\beta_j.
\end{equation}
It is straightforward to show that such numbers always exist, for the sake of readability we will not prove this here.
Now, define $\mu_j^{(i)}:=\frac{m^{(i)}}{1+x_j}$ for $j\neq i$ and 
$\mu_j^{(j)}:=\frac{m^{(j)}+x_j}{1+x_j}$.
Then
\[
\sum_{i=1}^k \mu_j^{(i)}\big(\lambda_i A_i+(1-\lambda_i)B_i\big) 
= \frac{1}{1+x_j}\sum_{i=1}^k m^{(i)}\big(l_i A_i+(1-l_i)B_i\big)
+\frac{x_j}{1+x_j}\big(l_i A_i+(1-l_i)B_i\big).
\]
By Equation \ref{eq:lift}, we have 
$\sum_{i=1}^k m^{(i)}\big(l_i A_i+(1-l_i)B_i\big)=0$ 
(recall that $n^{(i)}=0$), thus we have
\[
\sum_{i=1}^k \mu_j^{(i)}\big(\lambda_i A_i+(1-\lambda_i)B_i\big) 
=\frac{1}{1+x_j}\big(x_jl_i A_i+x_j(1-l_i)B_i\big).
\]
From our choice of $\gamma_j$ and $x_j$, we thus get
\[
\frac{1}{1 + x_j} \big(x_jl_i A_i + x_j (1 - l_i)B_i\big)
=\gamma_j\alpha_jA_j+(1-\gamma_j)\beta_j B_j = 
\gamma_j G_j + (1 - \gamma_j) H_j,
\]
which is what we want.
Further, we have
\[\sum_{i=1}^k \mu_j^{(i)}=\frac{1}{1+x_j}\left(\sum_{i=1}^k m^{(i)}+x_j\right)=\frac{1+x_j}{1+x_j}=1,\]
so Equation \eqref{eq:gadgets} is indeed satisfied.
\end{proof}

\section{Parametrized complexity}
\label{sec:paramCompl}

\subsection{An FPT algorithm for \texorpdfstring{$d$}{d} sets}

Recall that our original motivation comes from determining 
whether $d$ sets in $\R^d$ are well-separated. Let us consider 
those $d$ sets, and let us denote by $n$ the total number of 
extreme vertices on their respective convex hulls (for general convex
sets, this might be infninte, but we consider only the finite case). 
We say that $n$ is the \emph{convex hull complexity} of the set family. 
We assume that we are given the extreme points of the convex hull 
of every set and hence have a finite number of points for every set. 

\FptAlgo*

\begin{proof}
For the $O(2^d)$ choices of index sets $I\subset [k]$, we check whether the convex hulls of $S_I$ and $S_{[k]\setminus I}$ intersect. 
For each $I$, we check with an LP whether there is a hyperplane 
separating the points from $S_I$ from the points in $S_{[k]\setminus I}$. 
This can be done by a linear program with $d+1$ variables $
a_0, a_1, \ldots, a_d $ describing a hyperplane in $\R^d$. 
The hyperplane is separating if the following constraints hold:
\begin{equation*}
	a_0 + \sum_{i=1}^d a_i p_i \geq 0 \qquad \text{ for all } p = (p_1, \ldots, p_d) \in S_I \quad \text{ and }
\end{equation*}
\begin{equation*}
	a_0 + \sum_{i=1}^d a_i q_i \leq 0 \qquad \text{ for all } q = (q_1, \ldots, q_d) \in S_{[k] \backslash I }
\end{equation*}
In total we have $O(n)$ constraints. 

If there exists a hyperplane for every $I$, we output that the family is well-separated. 
Thus, there exists a constant $c>0$ such that the total running time of the algorithm is in $O(2^d(nd)^{c}L)$, where $L$ is the number of input bits.

\end{proof}

\subsection{A W[1]-hardness proof}
 We show that \textup{\textsc{FlatTrans}} is $W[1]$-hard with respect to the dimension.
 
\WHardness*

\begin{proof}
We use a framework similar to the one introduced by Marx~\cite{marx2006parameterized}. The reduction is from the following problem: Given a graph $G=(V,E)$ with $n$ vertices, is there a clique of size $k$ in $G$? 

Before describing the point sets, we first explain the framework. We define a set of $k^2$ gadgets, that we call the \emph{encoding gadgets}. To help defining them, we assume that these gadgets lie on $k$ rows and $k$ columns. Note that this representation is purely a help for the definition, but does not correspond to any geometric structure of the point sets we define later. To each gadget we assign a set of admissible tuples $(i,j)$, with $1\leq i,j \leq n$. Let us assume that we are considering the gadget in row $\alpha$ and column $\beta$, with $1 \leq \alpha,\beta \leq k$. If $\alpha=\beta$, the set of admissible tuples is $\{(i,i)\mid 1\leq i \leq n\}$. Otherwise, the set of admissible tuples is $\{(i,j)\mid \{i,j\}\in E\}$. We have in addition the \emph{row gadgets} and the \emph{column gadgets}. A row gadget forces the left value of each encoding gadget from the same row to be the same. Similarly, a column gadget forces the right value of every encoding gadget from the same column to be the same. There is a row gadget for each row, and a column gadget for each column. We say that an encoding is \emph{valid} if each encoding gadget is assigned an admissible tuple, and if all the row and column gadgets are satisfied. As shown by Marx~\cite{marx2006parameterized}, $G$ has a clique of size $k$ if and only if there exists a valid encoding. First let us assume that $v_1,\dots,v_k$ form a clique. Then we assign to the encoding gadget in row $\alpha$ and column $\beta$ the tuple $(v_\alpha,v_\beta)$. Observe that this is an admissible tuple (as there is an edge between $v_\alpha$ and $v_\beta$), and that the encoding is valid since all rows have the same left value, and all columns have the same right value. Reciprocally, let us assume that we have a valid encoding. Assume that the left value of row $\alpha$ is $i$, and that the left value of row $\beta \neq \alpha$ is $j$. Then the encoding gadget in row $\alpha$ and column $\alpha$ is assigned the tuple $(i,i)$, thus column $\alpha$ is assigned right value $i$, which implies that the encoding gadget in row $\beta$ and column $\alpha$ is assigned the tuple $(j,i)$. We have shown that vertices $i$ and $j$ in $G$ are adjacent.

We now describe how to reduce the valid encoding problem to \textsc{FlatTrans}. We define $k^2+2k+2$ point sets in $\R^{k^2+4k}$. Let $k'$ denote $k^2+2k$ and let $k''$ denote $k^2+3k$. We consider the $k'$ gadgets from the framework described above, that is, $k^2$ encoding gadgets as well as $k$ row and $k$ column gadgets, respectively. Let $f$ denote a bijective function from the set of gadgets to $[k']$. For each encoding gadget $g$ in row $\alpha$ and column $\beta$, $1\leq \alpha,\beta \leq k$ we have a point set $P^{\alpha,\beta}$ that contains $O(n^2)$ points. First let us assume $\alpha=\beta$. The point set $P^{\alpha,\alpha}$ contains the points $p^{\alpha,\alpha}_{i}$, for $1\leq i \leq n$, where the coordinates of $p^{\alpha,\alpha}_{i}$ are: $p^{\alpha,\alpha}_{i}(x)=\delta_{f(g),x}+k^i\delta_{k'+\alpha,x}+k^i\delta_{k''+\alpha,x}$. Now let us assume that $\alpha \neq \beta$. The point set $P^{\alpha,\beta}$ contains the points $p^{\alpha,\beta}_{i,j}$, for $1\leq i,j \leq n$ and $\{i,j\} \in E$, where the coordinates of $p^{\alpha,\beta}_{i,j}$ are: $p^{\alpha,\beta}_{i}(x)=\delta_{f(g),x}+k^i\delta_{k'+\alpha,x}+k^j\delta_{k''+\beta,x}$. Now let $g$ be a row gadget, say for row $\alpha$. The point set $P^{\alpha,\cdot}$ contains the points $p^{\alpha,\cdot}_i$, for $1\leq i \leq n$, where $p^{\alpha,\cdot}_i(x)=\delta_{f(g),x}-k^{i+1}\delta_{k'+\alpha,x}$. Similarly, we have a point set $P^{\cdot,\beta}$ for the column gadget $g$ in column $\beta$, and $p^{\cdot,\beta}_i(x)=\delta_{f(g),x}-k^{i+1}\delta_{k''+\beta,x}$ for $1\leq i \leq n$. Finally, we have the point set $P_0=\{\mathbf{0}\}$ and the point set $P_1=\{p_1\}$, where for all $1\leq x \leq k'$, $p_1(x)=-1$, and $p_1(x)=0$ otherwise. Observe that we have indeed $k^2+2k+2$ point sets of size $O(n^2)$ in $\R^{k^2+4k}$. The absolute values of all point coordinates are at most $k^{n+1}$. Thus, we can describe it with $\log (k^{n+1})=(n+1)\log(k)$ bits. We claim that there is a $(k^2+2k)$-transversal if and only if $G$ has a clique of size $k$. From the reduction, this immediately implies that \textsc{FlatTrans} is $W[1]$-hard with respect to the dimension.

First let us assume that there is a clique of size $k$ in $G$. From what we argued, it implies that there is a valid encoding of the gadgets. We define a set of $k'+1$ points as follows. First we take the point $p_1$. If the tuple assigned to gadget in row $\alpha$ and column $\beta\neq \alpha$ is $(i,j)$, then we take the point $p^{\alpha,\beta}_{i,j}$. If the gadget in row $\alpha$ and column $\alpha$ is assigned the tuple $(i,i)$, then we take the point $p^{\alpha,\alpha}_i$. Likewise, if the left value of row $\alpha$ is $i$, we take the point $p^{\alpha,\cdot}_{i}$. Finally, if the right value of column $\beta$ is $j$, we take the point~$p^{\cdot,\beta}_j$. We denote those $k'+1$ points by $p_1,\dots,p_{k'+1}$ and claim that they lie on a common hyperplane which contains $\mathbf{0}$. It suffices to show that 
\begin{equation*}
\sum_{1\leq \ell\leq k'+1}\frac{1}{k'+1}p_\ell=\mathbf{0}.
\end{equation*}
Consider the first $k'$ coordinates. Recall that $f$ is a bijection between the set of gadgets and $[k']$ and recall that by definition, the points $p_\ell$ have exactly one entry $1$ in the first $k'$ coordinates. Therefore in this sum, we have exactly one entry $1$ from exactly one of the gadgets and exactly one entry $-1$ from the point $p_1$ in each of these coordinates. So it is clear that this equation is satisfied in the first $k'$ coordinates.
Now let us consider the coordinate $k'+\alpha$, for some $1\leq \alpha \leq k$. As the encoding is valid, it implies that the left value in row $\alpha$ of all encoding gadgets is the same. Let us denote by $i$ this left value. We have indeed 
\[
\sum_{1\leq \ell\leq k'+1}\frac{1}{k'+1}p_\ell(k'+\alpha)
=\frac{1}{k'+1}\left(\left(\sum_{1\leq \beta \leq k}k^i\right)-k^{i+1}
\right)=0.
\]
Likewise if the coordinate is of the form $k''+\beta$ for some $1\leq \beta \leq k$, we argue using the fact that the right value of all encoding gadgets in column $\beta$ is the same. This completes the first direction of our proof.

For the second direction, let us assume that there is a hyperplane $h$ that contains at least one point from each point set. By assumption one of these points is $\mathbf{0}$, another is $p_1$, and we denote the others by $p_2,\dots,p_{k'+1}$. This implies that we have $\mathbf{0}=\lambda_1p_1+\sum_{2\leq \ell\leq k'+1}\lambda_\ell p_\ell$, where $\lambda_\ell \in \R$ and $\sum_{1\leq \ell \leq k'+1}\lambda_\ell=1$. By looking at the $k'$ first coordinates, we immediately obtain $\lambda_1=\lambda_i=\frac{1}{k'+1}$, for all $2\leq i \leq k'+1$. Let assume that in point set $P^{\alpha,\beta}$ with $1\leq \alpha,\beta \leq k$, the point $p^{\alpha,\beta}_{i,j}$ is contained in $h$, for some $1\leq i,j \leq n$. Note that by definition, $(i,j)$ is an admissible tuple of the encoding gadget in row $\alpha$ and column $\beta$. We assign this tuple to this gadget, and do likewise with all other encoding gadgets. It remains to show that the left value of all encoding gadgets in the same row is the same, and that the same holds with the right value of encoding gadgets from the same column. Let us consider row~$\alpha$. We consider the points contained in $h$ that belong to $P^{\alpha,\beta}$, for some $1\leq \beta \leq k$. Let us denote by $Y$ the set of their $(k'+\alpha)$-th coordinate. Let $z$ be equal to $\max \{\log_k(y)\mid y \in Y\}$. By assumption, we know that $\sum_{y\in Y}y=k^i$ for some $2\leq i \leq n+1$. This is because the coefficients $\lambda_\ell$ for these point sets are equal to the coefficient for the point in $P^{\alpha,\cdot}$ contained in $h$. As the elements in $Y$ are non-negative, we obtain $i\geq z+1$. Assume for a contradiction that not all elements in $Y$ are equal. Then we have $\sum_{y\in Y}y<\sum_{y\in Y}k^z=k^{z+1}\leq k^i$. As this is not possible, we know that all elements in $Y$ are equal, which implies that the left value of all encoding gadgets in row $\alpha$ is the same. We can argue likewise for the columns. 
\end{proof}

\section{Conclusion and Open Problems}
We showed that the problem of testing well-separability of $k$ sets in $\R^d$ is hard. However, it may be that there exist some algorithms which solve the problem in a smarter way than simply testing the $2^k$ choices of index set. This question is still open.

It would be interesting to have some inapproximability results, or some better approximation algorithms, for the problem of finding a hyperplane that intersects as many points as possible in a point set $P$ in $\R^d$, where $d$ is not fixed.

\bibliographystyle{plain}
\bibliography{literature}
\end{document}